\newcommand{\Formula}[1]{\ensuremath{\mathrm{formula}(#1)}}
\newcommand{\Literal}[3]{\ensuremath{\langle #1,#2,#3\rangle}}
\newcommand{\tuples}[1]{\ensuremath{\bar{#1}}}
\newcommand{\isdef}{\stackrel{\mbox{\tiny def}}{=}}   % = def
\newcommand{\id}{\mathrm{\it id}}
\newcommand{\dom}{\mathrm{dom}}
\newcommand{\false}{\mathrm{\bf false}}
\newcommand{\true}{\mathrm{\bf true}}
\newcommand{\mliteral}{$M$-literal\xspace}  
\newcommand{\inliteral}{initial \mliteral}
\newcommand{\mclause}{$M$-clause\xspace}
\newcommand{\Mclauses}{$M$-Clauses}
\newcommand{\wfrm}{well-formed\xspace}   % definition for clause not sharing existential variables between literals
\newcommand{\ordinary}[1]{{\tt #1}}
\newcommand{\lit}{l}                     % m-literals
\newcommand{\litB}{m}
\newcommand{\litC}{k}
\newcommand{\cl}{C}                      % clauses
\newcommand{\clset}{\mathcal{C}}         % clause sets
\newcommand{\oclset}{\ordinary{C}}       % ordinary clause sets
\newcommand{\ocl}{\ordinary{\cl}}        % ordinary clauses
\newcommand{\form}{\ordinary{F}}         % ordinary formulas
\newcommand{\evar}{\alpha}               % existential variables
\newcommand{\evarB}{\beta}
\newcommand{\tab}{{\mathcal T}}          % tableaux
\newcommand{\otab}{\ordinary{T}}
\newcommand{\node}{\nu}                  % node
\newcommand{\nodeB}{\mu}
\newcommand{\leafA}{\lambda}             % leaf
\newcommand{\pathA}{P}                   % path
\newcommand{\abranch}{B}
\newcommand{\tuple}{\tuples{t}}
\newcommand{\tupleA}{\tuples{r}}
\newcommand{\pred}{L}                    % predicate or negated pred.
\newcommand{\predA}{P}                   % predicate symbol
\newcommand{\predB}{Q}
\renewcommand{\S}{{\cal S}}              % sets of substitutions
\newcommand{\Vars}{{\cal V}}  % set of variables
\newcommand{\uVars}{{\cal V}_{o}}  % set of ordinary variables
\newcommand{\eVars}{{\cal V}_{a}}  % set of existential variables
\newcommand{\UV}[1]{\ensuremath{\uVars (#1)}}
\newcommand{\EV}[1]{\ensuremath{\eVars(#1)}}
\newcommand{\VarsIn}[1]{\vars(#1)}
\newcommand{\idset}{\{ \id \}}
\newcommand{\keyprop}{($\star$)\xspace}
\newcommand{\subS}[1]{\eta_{#1}}
\newcommand{\conflict}{conflict\xspace}
\newcommand{\compact}{compact\xspace}
\newcommand{\mgu}{\text{mgu}}
\newcommand{\blocking}{blocking\xspace}
\newcommand{\domev}{\dom_{a}}
\newcommand{\esubst}{abstraction substitution\xspace}
\newcommand{\realizable}{realizable\xspace}
\newcommand{\bundledNF}{bundled normal form\xspace}
\newcommand{\BNF}{\ensuremath{\mathsf{BNF}}\xspace}
\newcommand{\vars}{{\cal V}} % variables
\newcommand{\evariable}{abstraction variable\xspace}
\newcommand\ident{\simeq}
\newcommand{\tableau}{$M$-tableau\xspace}
\title{A Tableaux Calculus for Reducing Proof Size}
\author{Michael Peter Lettmann\inst{1} \and Nicolas Peltier\inst{2}}
\institute{Technische Universit\"{a}t Wien, Institute of Information Systems, Vienna, Austria \email{michael.lettmann@tuwien.ac.at} \and Univ. Grenoble Alpes, CNRS, Grenoble INP, LIG, F-38000 Grenoble France \email{Nicolas.Peltier@univ-grenoble-alpes.fr}}
\begin{document}

\maketitle

\begin{abstract}
A tableau calculus is proposed, based on a compressed representation of clauses, where
literals sharing a similar shape may be merged. The 
inferences applied on these literals are fused when possible, which reduces the size of the proof.
It is shown that the obtained proof procedure is sound, refutationally complete and
allows to reduce the size of the tableau by an exponential factor. The approach is compatible with all usual refinements of tableaux. 
\end{abstract}

\section{Introduction}

Tableau methods (see for instance \cite{journals/jolli/Rijke01} or \cite{Haehnle:HandbookAR:tableaux:2001}) always played a crucial role in the development of new techniques for automated theorem proving. They are easy to comprehend and implement, well-adapted to interactive theorem proving, and, therefore, normally form the basis of the first proof procedure for any newly defined logic \cite{fit90}. Nonetheless, they cannot compete with resolution-based calculi both in terms of efficiency and deductive power (i.e.\ proof length, see for instance \cite{Eder92}).
This is partly due to the ability of resolution-based methods to generate lemmas and to simulate atomic cuts\footnote{We recall that the cut rule consists in expanding a
tableau by adding two branches with $\neg \phi$ and $\phi$ respectively, where $\phi$ is any formula (intuitively $\phi$ can be viewed as a lemma). A cut is atomic if $\phi$ is atomic.}  in a feasible way. There have been attempts to integrate some restricted forms of cut into tableau methods, improving both efficiency and proof size (see for instance \cite{letz94controlled,Haehnle:HandbookAR:tableaux:2001}). But, for more general forms of cuts, it is difficult to decide whether an application of the cut rule is useful or not, thus the rule is not really applicable during proof search. Instead, cuts may be introduced  after the proof is generated, to make it more compact by introducing lemmas and fusing recurring patterns \cite{DBLP:journals/tcs/HetzlLRW14,DBLP:journals/tcs/LeitschL18}.

In this paper, rather than trying to integrate cuts into the tableau calculus, we devise a new tableau procedure in which a proof compression, that is similar to the compressive power of a $\Pi_2$-cut, is achieved by employing a shared representation of literals. Formal definitions will be given later, but we now provide a simple example to illustrate our ideas. Consider the schema of clause sets: $\{ \bigvee_{i=1}^n p_0(a_i), \forall y.\neg p_n(y) \} \cup \{ \forall x.\neg p_{i-1}(x) \vee p_{i}(x) \mid i \in [1,n] \}$. A closed tableau can be constructed by adding $n$ copies of the clauses $\neg p_n(y^j)$ and $\neg p(x_{i-1}^j) \vee p(x_{i}^j)$ (for $i,j \in [1,n]$) and unifying all variables $x_i^j$ and $y^j$ with $a_j$. One gets a tableau of size $O(n^2)$. To make the proof more compact, we may merge the inferences applied for each $a_j$, since each of these constants are handled in the same way. This can be done by first applying the cut rule on the formula $\exists x.p(x)$. The branch corresponding the $\neg \exists x.p(x)$ can be closed by using the first clause. In the branch corresponding to $\exists x.p(x)$ a constant $c$ is generated by skolemization and the branch can be closed by unifying $x_i$ and $y$ with $c$. This yields a tableau of size $O(n)$. Since it is hard to guess in advance whether such an application of the cut rule will be useful or not, we investigate another solution allowing the same proof compression. We represent the disjunction $\bigvee_{i=1}^n p_0(a_i)$ by a single literal $p(\alpha)$, together with a set of substitutions $\{ [\alpha\backslash a_i] \mid i \in [1,n]\}$. Intuitively, this literal states that $p(\alpha)$ holds for some term $\alpha$, and the given set of substitutions specifies the possible values of $\alpha$. In the following, we call such variables $\alpha$ {\em {\evariable}s}. The clauses are kept as compact as possible by grouping all literals with the same heads and in some cases inferences may be performed uniformly regardless of the value of $\alpha$. In our example we get a tableau of size $O(n)$ by unifying $x_i^j$ and $y^j$ with $\alpha$, this tableau may be viewed as a compact representation of an ordinary tableau, obtained by making $n$ copies of the tree, with $\alpha = a_1,\dots,a_n$. If we find out that an inference is applicable only for some specific value(s) of $\alpha$ (e.g., if one wants to close a branch by unifying $p_0(\alpha)$ with a clause $\neg p_0(a_1)$), then one may ``separate'' the literal by isolating some substitution (or sets of substitutions) before proceeding with the inference.

In this paper, we formalize these ideas into a tableau calculus called {\em \tableau}. Basic inference rules are devised to construct {\tableau}x and a strategy is provided to apply these rules efficiently, keeping the tableau as compact as possible. We prove that the procedure is sound and refutationally complete and that it may reduce the size of the proofs by an exponential factor. Our approach may be combined with all the usual refinements of the tableau procedure.

\section{Notations}

We briefly review usual definitions (we refer to, e.g., \cite{DBLP:books/el/RobinsonV01} for details).
Terms, atoms and clauses are built as usual over a (finite) set of function symbols $\Sigma$ (including constants, i.e.\ nullary function symbols), an (infinite and countable) set of variables $\Vars$ and a (finite) set of predicate symbols $\Omega$. 
The set of variables occurring in an expression (term, atom or clause) $e$ is denoted by 
$\VarsIn{e}$. For readability, a term $f(t)$ is sometimes written $f t$. Ordinary (clausal) tableaux are trees labelled by literals and built by applying Expansion and Closure rules, the Expansion rule expands a leaf by $n$ children labelled by literals $l_1,\dots,l_n$, where a copy of $l_1 \vee \dots \vee l_n$ occurs in the clause set at hand, and the Closure rule closes a branch by unifying the atoms of two complementary literals. A substitution is a function (with finite domain) mapping variables to terms.  A substitution mapping $x_i$ to $t_i$ (for $i \in [1,n]$) is written $[ (x_1,\dots,x_n)\backslash (t_1,\dots,t_n)]$. The identity substitution (for $n=0$) is denoted by 
$\id$.  The image of an expression $e$ by a substitution $\sigma$ is defined  inductively as usual and written $e\sigma$.

\section{A Shared Representation of Literals}

We introduce the notion of an \mliteral, that is a compact representation of a disjunction of ordinary literals with the same shape. 
The interest of this representation is that it will allow us to perform similar inferences in parallel on all these literals. We assume that $\vars$ is partitioned into two (infinite) sets $\uVars$ and $\eVars$. The variables in $\uVars$ are ordinary variables. They may be either universally quantified variables in clauses, or rigid variables in tableaux. The variables in $\eVars$ are called {\em {\evariable}s}. These are not variables in the standard sense, but can been seen rather as placeholders for a term that may take different values in different literals or branches. These variables will permit to share inferences applied on different literals. The set of ordinary variables (resp.\ {\evariable}s) that occur in a term $Q$ is denoted by $\UV{Q}$ (resp.\ $\EV{Q}$). A {\em renaming} is  an injective substitution $\sigma$ such that 
$x \in \uVars \Rightarrow x\sigma \in \uVars$ and $\alpha \in \eVars \Rightarrow \alpha\sigma \in \eVars$.

\begin{definition}[Syntax of \Mclauses]
An \emph{\mliteral} is either $\true$ or a triple $\Literal{\pred}{\tuple}{\S}$, where:
\begin{itemize}
\item{$\pred$ is either a predicate symbol $\predA $ or the negation of a predicate symbol $\neg \predA $,}
\item{$\tuple$ is an $n$-tuple of terms, where $n$ is the arity of $\predA $,}
\item{and $\S$ is a set of substitutions $\sigma$ with the same domain $D \subseteq \EV{\tuple}$ (by convention $D$ is empty if $\S = \emptyset$) and such that  $\VarsIn{\tuple\sigma}\cap D=\emptyset$.}
\end{itemize}
An \emph{\mclause} is a set of {\mliteral}s, often written as a disjunction. 
\end{definition}

With a slight abuse of words, we will call the set $D$ in the above definition the {\em domain of $\S$} (denoted by $\dom(\S)$). 
The semantics of {\mclause}s is defined by associating each \mliteral with an ordinary clause (or $\true$):

\begin{definition}[Semantics of \Mclauses]
For every \mliteral $\lit$, we denote by $\Formula{\lit}$ the formula defined as follows (with the convention that empty disjunctions are equivalent to $\false$):
\[
\begin{array}{lll}
\Formula{\Literal{\pred}{\tuple}{\S}} & \isdef & \bigvee_{\theta\in \S} \pred(\tuple\theta) \\
\Formula{\true} & \isdef & \true \\
\end{array}
\]
For every \mclause $\cl$, we denote by $\Formula{\cl}$ the clause $\bigvee_{\lit \in \cl}\Formula{\lit}$.  
For every set of {\mclause}s $\clset$, we denote by $\Formula{\clset}$ the formula (in conjunctive normal form)
$\bigwedge_{\cl\in \clset}\Formula{\cl}$.

We write $E \ident E'$ iff $\Formula{E} = \Formula{E'}$ (up to the usual properties of $\vee$ and $\wedge$: associativity, commutativity and idempotence).
\end{definition}

\begin{example}
Let $\predA $ be a unary predicate, $\predB $ be a binary predicate, $c$ be a constant, $f,g$ be unary functions, $x$ be an ordinary variable, and $\alpha,\beta,\gamma$ be {\evariable}s. 
The triples $\lit_1 = \Literal{\predA }{\alpha}{\{ [\alpha\backslash f(c)]\}}$ and \[\lit_2= \Literal{\predB }{(\beta ,f(\gamma))}{\{ [(\beta ,\gamma )\backslash (f(c),c)],[(\beta ,\gamma )\backslash (c,f(c))]\}}\] are {\mliteral}s, and 
\[
\begin{array}{lll}
\Formula{\lit_1} & = & \predA (f(c)) \\
\Formula{\lit_2} & = & \predB (f(c),f(c))\lor \predB (c,f(f(c)))
\end{array}
\]
The common shape $\predB (\cdot,f(\cdot))$ is shared between the two literals  in the second clause.
\end{example}

\begin{remark}
Observe that if $\S = \emptyset$ then $\Formula{\Literal{\pred}{\tuple}{\S}} = \false$, i.e.\ $\Literal{\pred}{\tuple}{\S}$ denotes an empty clause.
Moreover, any ordinary literal  may be encoded as an \mliteral where the set of substitutions is a singleton, e.g., $\Literal{\predA }{(\evar,x)}{\{ [\evar\backslash a] \}} \simeq \predA (a,x)$. Also, an \mliteral $\Literal{\pred}{\tuple}{\{ \sigma \}}$ is always equivalent to $\Literal{\pred}{\tuple\sigma}{\{ \id \}}$. 
\end{remark}

The application of a substitution $\sigma$ to an \mliteral is defined as follows:
\[
\begin{array}{lll}
(\true)\sigma						&	\isdef	&	\true \\
\Literal{\pred}{\tuple}{\S}\sigma	&	\isdef	&	\Literal{\pred}{\tuple\sigma'}{\{ \theta\sigma \mid \theta \in \S \}} 
\end{array}
\]
where $\sigma'$ denotes the restriction of $\sigma$ to the variables not occurring in $\dom(\S)$.

\begin{example}
Let $\lit = \Literal{\predA}{(\alpha,x)}{ \{ [\alpha\backslash x], [\alpha\backslash y] \}}$ and $\sigma = [x\backslash a]$. Then: \[\lit\sigma = \Literal{\predA}{(\alpha,a)}{ \{ [\alpha\backslash a], [\alpha\backslash y] \}}\]

Let  $\lit' = \Literal{\predB}{(\alpha)}{ \{ [\alpha\backslash a], [\alpha\backslash y] \}}$ and $\theta = [\alpha\backslash a]$. Then $\lit\theta = \lit'$.
\end{example}

\begin{proposition}
Let $\lit = \Literal{\pred}{\tuple}{\S}$ be an \mliteral. If $\dom(\S) = \emptyset$, then one of the following conditions hold:
\begin{itemize}
\item{$\S = \emptyset$ and $\Formula{\lit} = \false$;}
\item{$\S = \{ \id \}$ and $\Formula{\lit} = \pred(\tuple)$.}
\end{itemize}
\end{proposition}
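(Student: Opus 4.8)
The plan is to proceed by a straightforward case analysis on whether $\S$ is empty, relying on the single structural fact that the only substitution whose domain is empty is the identity $\id$. Indeed, a substitution is by definition a function with finite domain, so a substitution with empty domain is the empty function, which is uniquely determined and is precisely $\id$.

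First I would dispatch the case $\S = \emptyset$. Here $\Formula{\lit} = \bigvee_{\theta\in\S}\pred(\tuple\theta)$ is an empty disjunction, hence equal to $\false$ by the stated convention on empty disjunctions. This yields the first bullet, and it also matches the observation already recorded in the preceding remark.

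Next I would treat the case $\S \neq \emptyset$. By the definition of an \mliteral, all elements of $\S$ share the common domain $\dom(\S)$, which is assumed to be $\emptyset$. By the fact noted above, the only substitution with empty domain is $\id$, so every element of $\S$ equals $\id$; since $\S$ is a set and is nonempty, this forces $\S = \{\id\}$. Then $\Formula{\lit} = \pred(\tuple\id) = \pred(\tuple)$, giving the second bullet.

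I do not expect any genuine obstacle here: the statement is essentially an unfolding of the definitions combined with the uniqueness of the empty-domain substitution. The only point deserving a moment of care is the justification that two substitutions sharing the empty domain must coincide, but this is immediate from the extensionality of functions, the empty function being unique.
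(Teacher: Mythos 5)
Your proof is correct and takes essentially the same route as the paper, whose entire argument is the observation that the identity is the only substitution with empty domain; your case split on $\S = \emptyset$ versus $\S \neq \emptyset$ simply makes explicit what the paper leaves implicit.
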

\begin{proof}
The identity is the only substitution with empty domain, whence the result.
\end{proof}

A given ordinary clause may be represented by many different {\mclause}s, for instance 
$\predA (a) \vee \predA (b)$ may be represented as 
$(\predA ,(a),\{ \id \}) \vee (\predA ,(b),\{ \id \})$ or
$(\predA ,(\evar),\{ [\evar\backslash a], [\evar\backslash b] \})$, or even $(\predA ,(\evar),\{ [\evar\backslash a], [\evar\backslash b] \}) \vee (\predB ,(\evarB),\emptyset)$. In practice it is preferable to start with a representation in which useless literals are deleted and in which the remaining literals are grouped when possible. This motivates the following:
\begin{definition}
An \mclause $\cl$ is in \emph{\bundledNF} (short: \BNF) if it satisfies the following conditions.
\begin{itemize}
\item{For every \mliteral $\Literal{\pred}{\tuple}{\S}\in \cl$, $\S \not = \emptyset$.}
\item{If $\true \in \cl$ then $\cl= \{ \true \}$.}
\item{For all distinct literals 
$\Literal{\pred_1}{\tuple_1}{\S_1},\Literal{\pred_2}{\tuple_2}{\S_2}\in C$, $\pred_1$ is distinct from $\pred_2$. 
}
\end{itemize}
A \mclause set $\mathcal{C}$ is in \BNF if all {\mclause}s of $\mathcal{C}$ are in \BNF.
\end{definition}

\begin{example}
\label{example:M-clauses in compressed normal form}
Let 
\[
\lit_1\isdef \Literal{\predA }{\alpha}{\{ [\alpha\backslash f(c)]\}},
\] 
\[
\lit_2\isdef \Literal{\predA }{\beta}{\{ [\beta\backslash f(c)]\}},
\]
\[
\lit_3\isdef\Literal{\predB }{(\beta ,f\gamma)}{\{ [(\beta ,\gamma )\backslash (f(c),c)],[(\beta ,\gamma )\backslash (c,f(c))]\}},
\]
and
\[
\lit_4\isdef \Literal{\predB }{(\beta ,\gamma)}{\{ [(\beta ,\gamma )\backslash (f(c),c)],[(\beta ,\gamma )\backslash (c,f(c))]\}}
\]
be {\mliteral}s. The \mclause $\{ \lit_3,\lit_4\}$ is not in \BNF while the {\mclause}s $\{ \lit_1,\lit_4\}$ and $\{\lit_2,\lit_4\}$ are in \BNF. 
\end{example}

\begin{definition}
An \mclause $\cl$ is {\em \wfrm} if for all distinct literals 
$\lit =\Literal{\pred_1}{\tuple_1}{\S_1}$ and $\litB =\Literal{\pred_2}{\tuple_2}{\S_2}$ in $C$, $\dom(\S_1) \cap \dom(\S_2) = \emptyset$.
\end{definition}

\begin{example}
Consider the two {\mclause}s $\cl_1:=\{ \lit_1,\lit_4\}$ and $\cl_2:=\{\lit_2,\lit_4\}$ of Example \ref{example:M-clauses in compressed normal form}. $\cl_1$ is \wfrm, $\cl_2$ is not \wfrm. By renaming, $\cl_2$ can be transformed into $\cl_1$.
\end{example}

It is clear that every \mclause can be transformed into an equivalent \wfrm \mclause by renaming. In the following, we shall implicitly assume that all the considered {\mclause}s are \wfrm.

\begin{lemma}
Let $\form$ be a formula in conjunctive normal form. Then there is an \mclause set $\clset$ in \BNF such that $\Formula{\clset}\ident \form$.
\end{lemma}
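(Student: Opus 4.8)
The plan is to reduce the CNF formula to an \mclause set clause by clause, representing each ordinary clause by a single \mclause obtained by \emph{bundling} together all literals that share the same signed predicate symbol. First I would write $\form = \bigwedge_{i=1}^{m} \cl_i$, where each $\cl_i$ is a (possibly empty) disjunction of ordinary literals. It then suffices to build, for each $i$, an \mclause $\cl_i'$ in \BNF with $\Formula{\cl_i'} \ident \cl_i$, since the set $\clset = \{ \cl_1', \dots, \cl_m' \}$ will satisfy $\Formula{\clset} = \bigwedge_i \Formula{\cl_i'} \ident \bigwedge_i \cl_i = \form$ and will be in \BNF as soon as each $\cl_i'$ is.

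For a fixed $i$, I would group the literals of $\cl_i$ by their head: for each signed predicate symbol $\pred$ (of underlying arity $n$) occurring in $\cl_i$, let $\pred(\tupleA_1), \dots, \pred(\tupleA_k)$ be the literals of $\cl_i$ with head $\pred$. I would choose $n$ fresh {\evariable}s $\evar_1, \dots, \evar_n$ (fresh in the sense that they do not occur in $\form$, and using disjoint {\evariable}s for distinct heads), set $\tuple = (\evar_1, \dots, \evar_n)$ and $\S = \{\, [\tuple \backslash \tupleA_l] \mid 1 \le l \le k \,\}$, and take $\Literal{\pred}{\tuple}{\S}$ as the bundled \mliteral for $\pred$. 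Let $\cl_i'$ be the set of these {\mliteral}s, one per head occurring in $\cl_i$; if $\cl_i$ is empty, let $\cl_i'$ be the empty \mclause, which denotes $\false$ by the empty-disjunction convention.

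Next I would verify the three things that must hold. That each $\Literal{\pred}{\tuple}{\S}$ is a legal \mliteral: its domain is $\dom(\S) = \{\evar_1,\dots,\evar_n\} = \EV{\tuple}$, and the side condition $\VarsIn{\tuple\theta} \cap \dom(\S) = \emptyset$ holds for every $\theta = [\tuple\backslash\tupleA_l]$ because $\tuple\theta = \tupleA_l$ contains only symbols of $\form$, none of which is one of the fresh $\evar_j$. That $\cl_i'$ is in \BNF: every occurring head contributes at least one argument tuple, so $\S \neq \emptyset$; no \mliteral equals $\true$; and distinct {\mliteral}s of $\cl_i'$ carry distinct heads by construction. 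That $\cl_i'$ is \wfrm: the domains of distinct bundled {\mliteral}s are the disjoint sets of fresh {\evariable}s chosen per head. Finally, unfolding the semantics gives $\Formula{\Literal{\pred}{\tuple}{\S}} = \bigvee_{l=1}^{k} \pred(\tuple[\tuple\backslash\tupleA_l]) = \bigvee_{l=1}^{k} \pred(\tupleA_l)$, which is exactly the disjunction of the $\pred$-literals of $\cl_i$; taking the disjunction over all heads recovers $\cl_i$ up to reordering and duplicate removal, both absorbed by $\ident$ since it already quotients by associativity, commutativity and idempotence of $\vee$.

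The argument is essentially bookkeeping, so I expect no deep obstacle; the only genuinely delicate point is freshness management. The {\evariable}s introduced for bundling must avoid all variables of $\form$ so that the domain side condition $\VarsIn{\tuple\theta} \cap \dom(\S) = \emptyset$ is satisfied, and they must be chosen disjointly across heads (and, to respect the standing \wfrm-ness convention, across clauses) so that the resulting {\mclause}s are \wfrm. Handling the degenerate cases uniformly---nullary predicates, where all $[\tuple\backslash\tupleA_l]$ collapse to $\id$ and $\S$ reduces to $\{\id\}$, and empty ordinary clauses, represented by the empty \mclause denoting $\false$---requires a little care but introduces no new idea.
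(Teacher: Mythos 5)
Your proposal is correct and follows essentially the same route as the paper's own proof: both group the literals of each ordinary clause by signed predicate symbol, bundle each group into an \mliteral over a tuple of fresh {\evariable}s with one substitution per argument tuple, and collect the resulting {\mclause}s. The only difference is that you spell out the verification of the \BNF, \wfrm and freshness side conditions (and the degenerate cases of empty clauses and nullary predicates) that the paper dismisses as ``easy to check''.
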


\begin{proof}
Let $\ocl$ be a clause of $\form$ and let $\pred_1,\ldots ,\pred_m$ the pairwise different predicate symbols or negated predicate symbols occurring in $\ocl$. For each symbol $\pred_i$, we collect all term tuples $\tuple_{1,i},\ldots ,\tuple_{k_i,i}$ such that $\pred_i(\tuple_{j,i}) \in \ocl$ (for $j = 1,\dots,k_i$). It is clear that all the tuples $\tuple_{j,i}$ for $j=1,\dots,k_i$ have the same length $n_i$, where $n_i$ is the arity of the predicate symbol of $\pred_i$. We define the \mliteral $\lit_i \isdef \Literal{\pred_i}{\tuples{\evar_i}}{\S_i}$ where $\tuples{\evar_i}$ is a tuple of $n_i$ fresh {\evariable}s and $\S_i$ is the set of substitutions $\{ [\tuples{\evar_i}\backslash\tuple_{1,i}],\ldots ,[\tuples{\evar_i}\backslash\tuple_{k_i,i}]\} $. Then we can define the \mclause $\cl\isdef \{ \lit_1,\ldots, \lit_m\}$. It is easy to check that $\Formula{\cl} = \ocl$ and that $\cl$ is in \BNF. By applying this method to every clause  of $\form$, we eventually get a set of {\mclause}s in \BNF $\clset$ such that $\Formula{\clset}\ident \form$.
\end{proof}

\begin{example}
Consider the clause $\{ \lit_3,\lit_4\}$ of Example \ref{example:M-clauses in compressed normal form}. It can be written in \BNF as 
$\Literal{\predB }{(\beta ,\gamma)}{\S}$, where $\S$ denotes the following set of substitutions:
\[
\{ [(\beta ,\gamma )\backslash (f(c),f(c))],[(\beta ,\gamma )\backslash (c,f(f(c)))], [(\beta ,\gamma )\backslash (f(c),c)],[(\beta ,\gamma )\backslash (c,f(c))]\}
\]
\end{example}

\section{A Tableaux Calculus for \Mclauses}

\label{sect:rules}

In this section, we devise a tableaux calculus for refuting sets of {\mclause}s. 
This calculus is defined by a set of inference rules, that, given an existing tableau $\tab$, allow one to:
\begin{enumerate}
\item{Expand a branch with new children, by introducing a new copy of an \mclause of the set at hand.}
\item{Instantiate some of the (rigid) variables occurring in the tableau.}
\item{Separate shared literals inside an \mclause, so that different inferences can be applied on each of the corresponding branches. The rule can be applied on nodes that are not leaves.}
\end{enumerate}
Steps $1$ and $2$ are standard, but Step $3$ is original.

\begin{definition}[Pre-Tableau]
\label{def:pre-tab}
A {\em pre-tableau} is  a tree $\tab$ where vertices are labelled by {\mliteral}s  or by $\false$. We call the direct successors of a node its \emph{children}. The {\em root} is the (unique) node that is not a child of any node in $\tab$ and a {\em leaf} is a node with no child. A \emph{path} $\pathA$ is a sequence of nodes $(\node_1,\ldots ,\node_n)$ such that $\node_{i+1}$ is a child of $\node_i$ for $i\in[1,n-1]$. Furthermore, we call $\node_1$ the \emph{initial node} of $\predA $ and $\node_n$ the \emph{last node} of $\pathA$. A \emph{branch} is a path such that the initial node is the root 
and the last node is a leaf. With a slight abuse of words we say that a branch contains an \mliteral $\lit$ if it contains
a node labelled by $\lit$.

The {\em descendants} of a node $\node$ are inductively defined as $\node$ and the descendants of the children of $\node$. The \emph{subtree of root $\node$ in $\tab$} is the subtree consisting of all the descendants of $\node$, as they appear in $\tab$.

If $\node$ is a non-leaf node with exactly $n > 0$ children $\node_1,\dots,\node_n$ labelled by {\mliteral}s $\lit_1,\dots,\lit_n$ respectively, then 
the {\em formula associated with $\node$} is defined as: $\bigvee_{i=1}^n \Formula{\lit_i}$.

We say that an \mliteral $\Literal{\pred}{\tuple}{\S}$  (resp.\ a node $\node$ labelled by $\Literal{\pred}{\tuple}{\S}$) \emph{introduces an \evariable $x$}  if $x \in \dom(\S)$.
\end{definition}

\begin{definition}
Let $\tab$ be a pre-tableau and $\sigma$ be a substitution.  Then $\tab\sigma$ denotes the result of applying $\sigma$ to all {\mliteral}s labelling the nodes of $\tab$. 
\end{definition}

\begin{definition}[Tableau for a Set of \Mclauses]
An {\em \tableau} $\tab$ for a set of {\mclause}s $\clset$ is a pre-tableau build inductively by applying the rules Expansion, Instantiation and Separation to an initial tableau containing only one node, labelled by $\true$ (also called the \emph{\inliteral}). 
\end{definition} 
In the following, the word ``tableau'' always refers to an {\tableau}, unless specified otherwise (we use the expression ``ordinary tableau" for standard ones).

The rules are defined as follows (in each case, $\tab$ denotes a previously constructed tableau for a set of {\mclause}s $\clset$.

\paragraph*{Expansion Rule.}

Let $\leafA$ be a leaf of $\tab$, and $\cl$ be an element of $\clset$ not containing  $\true$. Let $\cl'$ be a copy of $\cl$ where all variables that occur also in $\tab$ are renamed such that $\cl'$ share no variable\footnote{Note that both ordinary and {\evariable}s are renamed.} with $\tab$. The pre-tableau $\tab'$ constructed by adding a new child labelled by $\lit$ to $\leafA$ for each $\lit\in \cl'$ is a tableau for $\clset$.

\paragraph*{Instantiation Rule.}

Let $t$ be a term and $x\in\uVars$ such that for all nodes $\node, \node'$, if $\node$ is labelled by an \mliteral containing $x$ and $\node'$ introduces an \evariable $\evar\in \EV{t}$, then $\node$ is a proper descendant of $\node'$. Then $\tab [x\backslash t]$ is a tableau for $\clset$.

\begin{remark}
Observe that if $t$ contains no {\evariable}s then the condition always holds, since no node $\node'$ satisfying the above property exists. 
In practice, the Instantiation rule should of course not be applied with arbitrary variable and term. Unification will be used instead to find the most general instantiations closing a branch. A formal definition will be given later (see Definition \ref{def:compact}).
\end{remark}

\begin{example}
Let
\begin{center}
\begin{minipage}{.08\textwidth}
\[
\tab_1\isdef
\]
\end{minipage}
\begin{minipage}{.26\textwidth}
\begin{center}
\begin{small}
\begin{forest}
[{$\mathbf{true}$} 
 [{$\Literal{\neg \predA }{x}{\{\id\}}$} 
  [{$\Literal{\predA}{\alpha}{\{ [\alpha\backslash a], [\alpha \backslash b] \}}$} ] ] ]
\end{forest}
\end{small}
\end{center}
\end{minipage}
\hspace{.1\textwidth}
\begin{minipage}{.08\textwidth}
\[
\tab_2\isdef
\]
\end{minipage}
\begin{minipage}{.26\textwidth}
\begin{center}
\begin{small}
\begin{forest}
[{$\mathbf{true}$} 
 [{$\Literal{\predA}{\alpha}{\{ [\alpha\backslash a], [\alpha \backslash b] \}}$} 
  [{$\Literal{\neg \predA }{x}{\{\id\}}$} ] ] ]
\end{forest}
\end{small}
\end{center}
\end{minipage}
\end{center}
be two tableaux for some set of {\mclause}s $\clset$. The pre-tableau $\tab_1[x\backslash \alpha]$ is not a tableau, because $x$ is substituted by a term containing an \evariable $\alpha$, and $x$ occurs above the literal introducing $\alpha$. On the other hand, $\tab_2[x\backslash\alpha]$ is a tableau.
\end{example}

\paragraph*{Separation Rule.}

The rule is illustrated in Figure \ref{fig.initial tree Rule 3} towards Figure \ref{fig.S_2 empty tree Rule 3 .2}. Let $\node$ be a non-leaf node of $\tab$. Let $\nodeB$ be a child of $\node$, labelled by $\lit =\Literal{\pred}{\tuple}{\S}$. Let $\tupleA = \tuple\theta$ be an instance of $\tuple$, with $\dom(\theta) = \dom(\S)$  and $\EV{\tuple\theta} \cap \dom(\S) = \emptyset$. Let $\S_1$ be the set of substitutions $\sigma \in \S$ such that there exists a substitution $\sigma'$ with $\tuple\sigma = \tupleA\sigma'$ and every variable in $\dom(\sigma')$ is an \evariable not occurring in $\tab$, and let $\S_2 \isdef \S\setminus \S_1$.  Assume that $\S_1 \not = \emptyset$. We define the new literal $\lit'\isdef \Literal{\pred}{\tupleA}{\{ \sigma' \mid \sigma \in \S_1 \}}$. The Separation rule is defined as follows:
\begin{enumerate}
  \item We apply the substitution $\theta$ to $\tab$\footnote{Actually, due to the above conditions, the variables in $\dom(\theta)$ only occur in the subtree of root $\nodeB$, hence $\theta$ only affects this subtree.} .
  \item We replace the label $\lit$ of $\nodeB$ by $\lit'$.
  \item We add a new child to the node $\node$, labelled by a literal $\Literal{\pred}{\tuple}{\S_2}$.
\end{enumerate}
Observe that if $\S_2 = \emptyset$ then $\Formula{\Literal{\pred}{\tuple}{\S_2}} = \false$ hence the third step may be omitted, since the added branch is unsatisfiable anyway. The rule does not apply if $\S_1$ is empty.
\begin{figure}
\begin{minipage}[t]{.3\textwidth}
\centering
\caption{The initial tree in the Separation rule.}
\label{fig.initial tree Rule 3}
\begin{center}
\begin{small}
\begin{forest}
[{$\cdot$}
 [{$\cdot$}, edge=dotted
  [{$\litB$}
   [{$\lit$}
    [{$\mathcal{T}_3$} ] ] 
   [{$\mathcal{T}_2$} ] ] 
  [{$\mathcal{T}_1$} ] ] ]
\end{forest}
\end{small}
\end{center}
where $\litB$ is the label of node $\node$ and $\mathcal{T}_1,\mathcal{T}_2$, and $\mathcal{T}_3$ are possibly empty subtrees.
\end{minipage}
\hspace{.03\textwidth}
\begin{minipage}[t]{.3\textwidth}
\centering
\caption{The tree after an application of the Separation rule.}
\label{fig.S_2 empty tree Rule 3}
\begin{center}
\begin{small}
\begin{forest}
[{$\cdot$}
 [{$\cdot$}, edge=dotted
  [{$\litB\theta$}
   [{$\lit'$} 
    [{$\mathcal{T}_3\theta$} ] ]
   [{$\Literal{\pred}{\tuple}{\S_2}$} ]
   [{$\mathcal{T}_2\theta$} ] ] 
  [{$\mathcal{T}_1\theta$} ] ] ]
\end{forest}
\end{small}
\end{center}
where $\litB$ is the label of node $\node$ and $\mathcal{T}_1,\mathcal{T}_2$, and $\mathcal{T}_3$ are possibly empty subtrees.
\end{minipage}
\hspace{.03\textwidth}
\begin{minipage}[t]{.3\textwidth}
\centering
\caption{The tree without redundant substitutions after an application of the Separation rule.}
\label{fig.S_2 empty tree Rule 3 .2}
\begin{center}
\begin{small}
\begin{forest}
[{$\cdot$}
 [{$\cdot$}, edge=dotted
  [{$\litB$}
   [{$\lit'$} 
    [{$\mathcal{T}_3\theta$} ] ]
   [{$\Literal{\pred}{\tuple}{\S_2}$} ]
   [{$\mathcal{T}_2$} ] ] 
  [{$\mathcal{T}_1$} ] ] ]
\end{forest}
\end{small}
\end{center}
where $\litB$ is the label of node $\node$ and $\mathcal{T}_1,\mathcal{T}_2$, and $\mathcal{T}_3$ are possibly empty subtrees.
\end{minipage}
\end{figure}

\begin{example}
\label{exa.rules tableau}
Let 
\begin{align*}
\lit_1:= &\; \Literal{\predA }{\alpha}{\{ [\alpha\backslash fc]\}}, \\ 
\lit_2:= &\; \Literal{\neg \predA }{\alpha'}{\{ [\alpha'\backslash fc]\}}, \\ 
\lit_3:= &\; \Literal{\neg \predB }{(\beta' ,\gamma')}{\{ [(\beta' ,\gamma' )\backslash (fx,y)]\}},\text{ and} \\
\lit_4:= &\; \Literal{\predB }{(\beta ,\gamma)}{\{ [(\beta ,\gamma )\backslash (fc,c)],[(\beta ,\gamma )\backslash (z,fc)],[(\beta ,\gamma )\backslash (fz,fc)]\}}
\end{align*}
be {\mliteral}s and $\mathcal{C}=\{\{ \lit_1,\lit_4\} ,\{ \lit_2\} ,\{\lit_3\}\}$ be an \mclause set in \BNF. Applying three times the Expansion rule, we can derive the tableau
\begin{center}
\begin{small}
\begin{forest}
[{$\mathbf{true}$} 
 [{$\Literal{\neg \predA }{\alpha'}{\{ [\alpha'\backslash fc]\}}$} 
  [{$\Literal{\predB }{(\beta ,\gamma)}{\{ [(\beta ,\gamma )\backslash (fc,c)],[(\beta ,\gamma )\backslash (z,fc)],[(\beta ,\gamma )\backslash (fz,fc)]\}}$} 
   [{$\Literal{\neg \predB }{(\beta' ,\gamma')}{\{ [(\beta' ,\gamma' )\backslash (fx,y)]\}}$} ] ] 
  [{$\Literal{\predA }{\alpha}{\{ [\alpha\backslash fc]\}}$} ] ] ]
\end{forest}
\end{small}
\end{center}
Now, we apply two times the Separation rule. First we choose $\litB$ of Figure \ref{fig.initial tree Rule 3} to be $\mathbf{true}$ and $\lit$ to be $\Literal{\neg \predA }{\alpha'}{\{ [\alpha'\backslash fc]\}}$, where the substitution $\theta$ is $[\alpha'\backslash fc]$ (hence we get $\S_1 = \{ \id \}$). Afterwards, we choose analogously $\Literal{\neg \predA }{fc}{\{ \id \}}$ (which is the result of the first application) and $\Literal{\predA }{\alpha}{\{ [\alpha\backslash fc]\}}$, with the substitution $[\alpha\backslash fc]$. Both times, the tuple $\tupleA$ of the Separation rule is $fc$ (with $\S_2 = \emptyset$ in both cases). This leads to the tableau:
\begin{center}
\begin{small}
\begin{forest}
[{$\mathbf{true}$} 
 [{$\Literal{\neg \predA }{fc}{\{ \id \}}$} 
  [{$\Literal{\predB }{(\beta ,\gamma)}{\{ [(\beta ,\gamma )\backslash (fc,c)],[(\beta ,\gamma )\backslash (z,fc)],[(\beta ,\gamma )\backslash (fc,fz)]\}}$} 
   [{$\Literal{\neg \predB }{(\beta' ,\gamma')}{\{ [(\beta' ,\gamma' )\backslash (fx,y)]\}}$} ] ] 
  [{$\Literal{\predA }{fc}{\{ \id \}}$} ] ] ]
\end{forest}
\end{small}
\end{center}
Afterwards, we again apply the Separation rule, to modify the node labelled with $\lit_4$ 
where $\tupleA =(f\delta ,\gamma^*)$ and $\S_2=\{ [(\beta ,\gamma )\backslash (z,fc)]\}$. We abbreviate $\Literal{\predA }{fc}{\{ \id \}}$ by $\litB$.
\begin{center}
\begin{small}
\begin{forest}
[{$\mathbf{true}$} 
 [{$\Literal{\neg \predA }{fc}{\{ \id \}}$} 
  [{$\Literal{\predB }{(f\delta ,\gamma^*)}{\{ [(\delta ,\gamma^* )\backslash (c,c)],[(\delta ,\gamma^* )\backslash (c,fz)]\}}$} 
   [{$\Literal{\neg \predB }{(\beta' ,\gamma')}{\{ [(\beta' ,\gamma' )\backslash (fx,y)]\}}$} ] ]
  [{$\Literal{\predB }{(\beta ,\gamma)}{\{ [(\beta ,\gamma )\backslash (z,fc)]\}}$} ]
  [{$\litB$} ] ] ]
\end{forest}
\end{small}
\end{center}
After some further applications of the Separation and Expansion rules, we are able to construct the following tableau by applying the Instantiation rule with the substitutions $[z\backslash fx'],[y\backslash fc],[x\backslash\delta],[y\backslash\gamma^*]$.
\begin{center}
\begin{small}
\begin{forest}
[{$\mathbf{true}$} 
 [{$\Literal{\neg \predA }{fc}{\{ \id \}}$} 
  [{$\Literal{\predB }{(f\delta ,\gamma^*)}{\{ [(\delta ,\gamma^* )\backslash (c,c)],[(\delta ,\gamma^* )\backslash (c,ffx')]\}}$} 
   [{$\Literal{\neg \predB }{(f\delta ,\gamma^*)}{\{ \id \}}$} ] ]
  [{$\Literal{\predB }{(fx',fc)}{\{ \id \}}$} 
   [{$\Literal{\neg \predB }{(fx',fc)}{\{ \id \}}$} ] ]
  [{$\Literal{\predA }{fc}{\{ \id \}}$} ] ] ]
\end{forest}
\end{small}
\end{center}
\end{example}

\section{Soundness}

\begin{definition}
Let $\tab$ be a pre-tableau or a tableau. A branch $\abranch$ of $\tab$ is \emph{closed} if it contains $\false$ or two nodes labelled by literals $\Literal{\pred_1}{\tuple_1}{\S_1},\Literal{\pred_2}{\tuple_2}{\S_2}$ such that $\tuple_1=\tuple_2$, $\pred_1=\predA $, $\pred_2=\neg \predA $ for some predicate symbol $\predA $. 
The (pre)-tableau $\tab$ is {\em closed} iff all branches of $\tab$ are closed.
\end{definition}
\begin{example}
The final tableau of Example \ref{exa.rules tableau} contains three branches, i.e.\ 
\begin{align*}
& \{\Literal{\neg \predA }{fc}{\emptyset},\Literal{\predB }{(f\delta ,\gamma^*)}{\{ [(\delta ,\gamma^* )\backslash (c,c)],[(\beta ,\gamma )\backslash (c,ffx')]\}},\Literal{\neg \predB }{(f\delta ,\gamma^*)}{\emptyset}\}, \\
& \{\Literal{\neg \predA }{fc}{\emptyset},\Literal{\predB }{(fx',fc)}{\emptyset},\Literal{\neg \predB }{(fx',fc)}{\emptyset}\}\text{, and} \\
& \{\Literal{\neg \predA }{fc}{\emptyset},\Literal{\predA }{fc}{\emptyset}\} .
\end{align*}
All of them are closed and so the tableau is closed. Observe that the inferences closing the branches corresponding to the literals $\predB(f(c),c)$ and $\predB(f(z),f(c))$ in $(\lit_1,\lit_4)$ are shared in the constructed tableau (both branches are closed by introduced suitable instances of $\lit_3$), whereas the literal $\predB(z,f(c))$ is handled separately (by instantiating $z$ by $f(x')$ and using yet another instance of $\lit_3$).
\end{example}

\begin{proposition}
\label{prop:disjointdom}
Let $\tab$ be a tableau. If $\node_1$ and $\node_2$ are distinct nodes
in $\tab$, labelled by the {\mliteral}s $\Literal{\pred_1}{\tuple_1}{\S_1}$ and $\Literal{\pred_2}{\tuple_2}{\S_2}$ respectively, then 
$\S_1$ and $\S_2$ have disjoint domains.
\end{proposition}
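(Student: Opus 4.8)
The plan is to prove the invariant asserted by the proposition by induction on the construction of $\tab$, i.e.\ on the number of rule applications used to build it. The base case is the one-node tableau labelled by $\true$: since $\true$ is not of the form $\Literal{\pred}{\tuple}{\S}$, there is no pair of triple-labelled nodes and the claim holds vacuously. For the inductive step I assume the invariant holds for a tableau $\tab$ and verify that it is preserved by each of the three rules. Throughout I restrict attention to nodes whose label is an $M$-literal of the form $\Literal{\pred}{\tuple}{\S}$ (the nodes labelled $\true$ or $\false$ are irrelevant to the statement).

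Before the case analysis I would isolate the one auxiliary fact that does most of the work: for every $M$-literal $\Literal{\pred}{\tuple}{\S}$ and every substitution $\sigma$, the domain of the substitution set of $\Literal{\pred}{\tuple}{\S}\sigma$ is contained in $\dom(\S)$. This is immediate from the definition of $\cdot\sigma$ on $M$-literals, where the composed substitutions $\theta\sigma$ are retained only on $\dom(\S)$ (as the example turning $[\alpha\backslash x][x\backslash a]$ into $[\alpha\backslash a]$ shows). Hence applying any substitution to a node can only shrink, never enlarge, its domain, and in particular can never insert into a domain a variable that was absent before. For the \textbf{Expansion} rule, the new children are the literals of $\cl'$, a fresh copy of a \wfrm{} $M$-clause $\cl$ obtained by a bijective relabelling of all its variables that respects the partition of $\Vars$ into $\uVars$ and {\evariable}s. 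Such a relabelling preserves \wfrm{}ness, so the new literals have pairwise disjoint domains, and those domains consist of variables not occurring in $\tab$, hence are disjoint from every old domain; this settles both new-vs-new and new-vs-old pairs. For the \textbf{Instantiation} rule we apply $[x\backslash t]$ with $x\in\uVars$; no node is added, and by the auxiliary fact every domain after the substitution is a subset of its former value, so pairwise disjointness is inherited directly from the induction hypothesis.

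The \textbf{Separation} rule is the delicate case and the one I expect to carry the real content. After the rule the triple-labelled nodes fall into three groups: (i) the freshly added child of $\node$, labelled $\Literal{\pred}{\tuple}{\S_2}$, whose domain is $\dom(\S)$ (call it $D$) when $\S_2\neq\emptyset$, and which is omitted otherwise; (ii) the node $\nodeB$, relabelled by $\lit'=\Literal{\pred}{\tupleA}{\{\sigma'\mid\sigma\in\S_1\}}$, whose domain, by the side condition defining $\S_1$, consists solely of {\evariable}s not occurring in $\tab$; and (iii) every remaining node, to which the global substitution $\theta$ with $\dom(\theta)=D$ has been applied. The argument is then bookkeeping on these groups. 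By the induction hypothesis, every old node other than $\nodeB$ has a domain disjoint from $\nodeB$'s former domain $D$; by the auxiliary fact, applying $\theta$ keeps each of these domains a subset of its original value, hence still inside $\VarsIn{\tab}$ and still disjoint from $D$, and still pairwise disjoint among themselves. Consequently the group-(i) domain $D$ meets none of the group-(iii) domains, and the former occurrence of $D$ at $\nodeB$ has been vacated by the relabelling. The group-(ii) domain is fresh for $\tab$, so it is disjoint both from $D\subseteq\VarsIn{\tab}$ and from every group-(iii) domain (all subsets of $\VarsIn{\tab}$).

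The main obstacle is exactly this interaction in the Separation case: one must keep straight that the domain $D$ is simultaneously \emph{released} from $\nodeB$ (through the relabelling to $\lit'$) and \emph{reused} by the new child, all while $\theta$ is applied everywhere; and one must use the freshness clause in the definition of $\S_1$ to conclude that $\dom(\lit')$ avoids every variable of $\tab$ and therefore every surviving domain. Note that no reasoning about the variables appearing in $\tab\theta$ is needed, since the relevant comparison is always between $\dom(\lit')$ (fresh for $\tab$) and domains that the auxiliary fact keeps inside $\VarsIn{\tab}$; this is what makes the bookkeeping go through cleanly.
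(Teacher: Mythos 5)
Your proof is correct and follows essentially the same route as the paper's: an induction on the tableau construction using well-formedness plus fresh renaming for Expansion, the fact that applying a substitution never enlarges the domain of a substitution set for Instantiation, and the freshness condition on $\dom(\sigma')$ for Separation. The paper only sketches this argument (``an easy induction on tableaux''), and your write-up supplies the bookkeeping details, in particular the careful handling of how the domain $D$ is released from $\nodeB$ and reused by the new child in the Separation case.
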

\begin{proof}
This is immediate since the {\mclause}s are  \wfrm and since variables are renamed before all applications of the Expansion rule, so that the considered \mclause share no variable with the tableau. Afterwards, none of the construction rules can affect the domain of the substitutions. For the Separation rule, the condition on $\dom(\sigma')$ in the definition of the rule guarantees that the new \mliteral $\lit'$ introduces variables that are distinct from those in $\tab$. The detailed proof is by an easy induction on tableaux.
\end{proof}

\begin{proposition}
\label{prop:cl_preserved}
Let $\tab$ be a tableau for a set of {\mclause}s $\clset$. For every non-leaf node $\node$ in $\tab$, the formula associated with $\node$ (as defined in Definition \ref{def:pre-tab}) is an instance of a formula $\Formula{\cl}$, where $\cl$ is a renaming of an \mclause in $\clset$. 
\end{proposition}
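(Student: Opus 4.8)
The plan is to argue by induction on the construction of the tableau $\tab$, following the three rules (Expansion, Instantiation, Separation) that generate it from the initial one-node tableau. In the base case $\tab$ consists of a single node labelled $\true$, which is a leaf, so there is nothing to prove. For the Expansion step, the only node whose set of children changes is the expanded leaf $\leafA$: its children are labelled exactly by the {\mliteral}s of a renamed copy $\cl'$ of some $\cl\in\clset$, so the formula associated with $\leafA$ is $\bigvee_{\lit\in\cl'}\Formula\lit=\Formula{\cl'}$, which is (trivially, via $\id$) an instance of $\Formula{\cl'}$ with $\cl'$ a renaming of $\cl\in\clset$. All other non-leaf nodes keep their labels and children, so the induction hypothesis applies unchanged.

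The technical heart is a commutation lemma between $\Formula\cdot$ and substitution: for an \mliteral $\lit=\Literal\pred\tuple\S$ and a substitution $\sigma$ with $\dom(\sigma)\cap\dom(\S)=\emptyset$, whose range contains no variable of $\dom(\S)$, and which is idempotent, one has $\Formula{\lit\sigma}\ident\Formula\lit\sigma$. I would prove this directly from the definitions of $\lit\sigma$ and $\Formula\lit$: since $\dom(\sigma)$ avoids $\dom(\S)$ the restriction $\sigma'$ occurring in $\lit\sigma$ equals $\sigma$, and a variable-by-variable comparison of $(\tuple\sigma)(\theta\sigma)$ with $(\tuple\theta)\sigma$ (for each $\theta\in\S$) shows they coincide: on the variables of $\dom(\S)$ this is immediate, while on the remaining variables it uses precisely the hypotheses that the range of $\sigma$ avoids $\dom(\S)$ and that $\sigma$ is idempotent. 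The range condition is genuinely needed, since otherwise a variable of $\dom(\S)$ could be created by $\sigma$ after $\theta$ has already been applied, breaking the equality.

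With this lemma the remaining two cases become routine. For Instantiation, $\tab'=\tab[x\backslash t]$ keeps the tree shape, and for each child literal $\lit=\Literal\pred\tuple\S$ of a non-leaf node either $x\notin\VarsIn\tuple$ (and then $\Formula{\lit\sigma}\ident\Formula\lit\sigma$ holds unconditionally, by associativity of substitution application) or $x\in\VarsIn\tuple$, in which case the side condition of the rule together with Proposition \ref{prop:disjointdom} forces $\EV t\cap\dom(\S)=\emptyset$, so the lemma applies; hence the formula associated with every non-leaf node gets $\sigma$ applied and, by the induction hypothesis, stays an instance of the same $\Formula\cl$ (composing the witnessing substitution with $\sigma$). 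For Separation, the only node whose children change is $\node$: replacing the child $\nodeB$ labelled $\lit=\Literal\pred\tuple\S$ by the two children $\lit'$ and $\Literal\pred\tuple{\S_2}$ preserves its associated formula, because $\S=\S_1\sqcup\S_2$ and the defining equality $\tuple\sigma=\tupleA\sigma'$ of $\S_1$ yields $\Formula{\lit'}\vee\Formula{\Literal\pred\tuple{\S_2}}\ident\Formula\lit$; as the other children of $\node$ (and all nodes outside the subtree of $\nodeB$) are untouched by $\theta$, the formula associated with $\node$ is unchanged and the induction hypothesis carries over. Every other affected node lies strictly inside the subtree of $\nodeB$, where $\theta$ is applied to the labels; there the commutation lemma again turns the associated formula into that formula with $\theta$ applied, preserving the ``instance of $\Formula\cl$'' property.

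I expect the main obstacle to be the Separation step, and specifically checking the hypotheses of the commutation lemma for $\theta$ throughout the subtree of $\nodeB$. The domain condition $\dom(\theta)\cap\dom(\S^*)=\emptyset$ for every literal $\Literal{\pred^*}{\tuple^*}{\S^*}$ below $\nodeB$ is immediate from Proposition \ref{prop:disjointdom}, but the range condition --- that the terms introduced by $\theta$ contain no {\evariable} belonging to the substitution set of any such literal --- must be extracted from the side conditions of the rule, namely that $\EV{\tuple\theta}\cap\dom(\S)=\emptyset$ and that the {\evariable}s newly introduced by $\tupleA$ do not occur in $\tab$. Making this precise (and confirming that $\theta$ indeed only affects the subtree of $\nodeB$, as asserted in the rule's footnote) is the one place where the bookkeeping of {\evariable}s really has to be done carefully.
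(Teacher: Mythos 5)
Your proposal is correct and follows essentially the same route as the paper's proof: an induction over the three construction rules, with Expansion immediate, Instantiation handled by observing that the new associated formulas are instances of the old ones, and Separation resolved by the key computation $\Formula{\lit'}=\bigvee_{\sigma\in\S_1}\pred(\tupleA\sigma')=\bigvee_{\sigma\in\S_1}\pred(\tuple\sigma)$, so that $\Formula{\lit'}\vee\Formula{\Literal{\pred}{\tuple}{\S_2}}\ident\Formula{\lit}$. The only difference is one of rigor, not strategy: your explicit commutation lemma $\Formula{\lit\sigma}\ident\Formula{\lit}\sigma$ (with its domain, range and idempotency hypotheses) spells out what the paper dismisses as holding ``by definition'' in the Instantiation and Separation cases.
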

\begin{proof}
It suffices to show that all the construction rules preserve the desired property.
\begin{itemize}
 \item{{\bf Expansion.} The property immediately holds for the nodes on which the rule is applied, by definition of the rule. 
 The other nodes are not affected.}
 \item{{\bf Instantiation.} By definition, the formula associated with a node $\node$ in the final tableau is an instance of the formula associated with $\node$ in the initial one. Thus the property holds.}
 \item{{\bf Separation.} The nodes occurring outside of the subtree of root $\nodeB$ are not affected. By definition, the formula associated with the descendants of $\nodeB$ in the new tableau are instances of formulas associated with nodes of the initial tableau. Thus it only remains to consider the node $\node$. The formula associated with $\node$ in the final tableau is obtained from that of the initial one by removing the formula corresponding to an \mliteral $\lit =\Literal{\pred}{\tuple}{\S}$ and replacing it by $\Formula{\lit'} \vee \Formula{\Literal{\pred}{\tuple}{\S_2}}$. Since $\lit' = \Literal{\pred}{\tupleA}{\{ \sigma' \mid \sigma \in \S_1 \}}$, we have $\Formula{\lit'} =  \bigvee_{\sigma \in \S_1} \pred(\tupleA\sigma') = \bigvee_{\sigma \in \S_1} \pred(\tuple\sigma)$ (since  $\tuple\sigma = \tupleA\sigma'$ by definition of the Separation rule). Thus $\Formula{\lit'} \vee \Literal{\pred}{\tuple}{\S_2} = \Literal{\pred}{\tuple}{\S}$ and the proof is completed.}
\end{itemize}
\end{proof}

\begin{proposition}
\label{prop:vardependency}
Let $\tab$ be a tableau for $\clset$.
Let $\evar \in \eVars$ be a variable introduced in a node $\node$ and assume that $\evar$ occurs in an \mliteral labelling a node $\nodeB$.
Then $\nodeB$ is a descendant of $\node$.
\end{proposition}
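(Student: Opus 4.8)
The plan is to argue by induction on the number of rule applications used to build $\tab$, establishing the slightly stronger invariant: for every \evariable $\evar$, if a node $\node$ introduces $\evar$, then every node whose label mentions $\evar$ is a descendant of $\node$ (recall that $\node$ counts as one of its own descendants). By Proposition~\ref{prop:disjointdom} at most one node introduces a given $\evar$, so this invariant is well posed, and it immediately yields the statement. The base case is the initial one-node tableau labelled by $\true$, which contains no {\evariable}s at all, so the invariant holds vacuously. For the inductive step I would assume the invariant for $\tab$ and verify that each of the three construction rules preserves it.

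For \textbf{Expansion}, the new children of the leaf $\leafA$ are labelled by the literals of a copy $\cl'$ of an \mclause, all of whose variables have been renamed fresh with respect to $\tab$. An \evariable introduced at an old node keeps all of its occurrences inside $\tab$ by the induction hypothesis, and it cannot appear in the new literals, which use only fresh variables. An \evariable $\evar$ introduced at one of the new children $\node_i$ is fresh, hence occurs in no old node; and since in the {\mclause}s under consideration each \evariable occurs in a single literal---the one having it in its domain---it occurs in no sibling $\node_j$ either, so $\node_i$ is its only occurrence. For \textbf{Instantiation}, the substitution $[x\backslash t]$ with $x\in\uVars$ leaves every domain unchanged (as already noted in the proof of Proposition~\ref{prop:disjointdom}), so each \evariable keeps its introducer; the only occurrences created are those of the variables of $\EV{t}$, placed exactly at the nodes that previously contained $x$. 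The side condition of the rule requires every such node to be a proper descendant of every node introducing a variable of $\EV{t}$, which is precisely what the invariant demands.

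The main obstacle is the \textbf{Separation} rule. Writing $\lit=\Literal{\pred}{\tuple}{\S}$ for the label of the affected child $\nodeB$, the rule applies $\theta$ (with $\dom(\theta)=\dom(\S)$) to the subtree of $\nodeB$, relabels $\nodeB$ by $\lit'=\Literal{\pred}{\tupleA}{\{\sigma'\mid\sigma\in\S_1\}}$, and attaches to $\node$ a new child labelled $\Literal{\pred}{\tuple}{\S_2}$. The key observation is that, by the induction hypothesis together with Proposition~\ref{prop:disjointdom}, all occurrences of the variables of $\dom(\S)$ lie inside the subtree of $\nodeB$; since $\dom(\theta)=\dom(\S)$, applying $\theta$ therefore erases every occurrence of these variables from that subtree and affects nothing outside it. I would then split the {\evariable}s of the new tableau into three groups. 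First, any \evariable whose introducer is neither $\nodeB$ nor the new child is left untouched: $\theta$ does not substitute it, because its introducer has a domain disjoint from $\dom(\S)$, and $\theta$ cannot create a fresh occurrence of it, since its images only involve fresh {\evariable}s; hence both its introducer and its occurrences are preserved. Second, the variables of $\dom(\S_2)=\dom(\S)$ now occur only at the new child, since they were confined to the subtree of $\nodeB$ and have just been erased from it, so the invariant holds for them with that new child as introducer (and if $\S_2=\emptyset$ the child is omitted and they occur nowhere). Third, the fresh {\evariable}s in the domain of $\lit'$ appear in the label of $\nodeB$, through $\tupleA$, and, via $\mathcal{T}_3\theta$, in descendants of $\nodeB$, but nowhere else, so they too are confined to the subtree of $\nodeB$. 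The delicate bookkeeping is exactly this simultaneous removal of the old domain variables from $\nodeB$'s subtree and introduction of fresh ones, where disjointness of domains and the confinement granted by the induction hypothesis are what keep everything inside the intended subtree.
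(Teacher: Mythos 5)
Your proof is correct and follows essentially the same route as the paper's, whose proof is exactly this induction over the construction rules, only stated in one sentence per case (freshness of renamed variables for Expansion, the side condition for Instantiation, and the remark that Separation only instantiates {\evariable}s). The one imprecision is your claim, in the Separation case, that the images of $\theta$ involve only fresh {\evariable}s: the rule as stated also allows them to contain {\evariable}s already occurring in the label of $\nodeB$ (since $\tuple\sigma = \tupleA\sigma'$ with $\dom(\sigma')$ fresh, every non-fresh \evariable of $\tupleA$ must occur in $\tuple\sigma$), but such variables are introduced at proper ancestors of $\nodeB$ by the induction hypothesis, so the new occurrences created inside the subtree of $\nodeB$ still respect the invariant.
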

\begin{proof}
By an easy induction on tableau. The property is preserved by any application of the 
Expansion rule because the variables in the considered \mclause are renamed by fresh variables, it is preserved by the Instantiation rule due to the conditions associated with the rule,  and the Separation rule only instantiates {\evariable}s.
\end{proof}

\begin{lemma}
\label{lem.lemma Soundness}
Let $\tab$ be a closed tableau for $\clset$ and let $\tab'$ be the tableau after applying once the Separation rule to a node $\node$ of $\tab$ with a child $\nodeB$ labelled with $\lit$. Then there is at most one branch $\abranch$ in $\tab'$ that is not closed. Moreover, this branch necessarily contains the  node  labelled by $\Literal{L}{\tuple}{\S_2}$ (see Figure \ref{fig.S_2 empty tree Rule 3 .2}). In particular, if $\S_2$ is empty then $\Literal{L}{\tuple}{\S_2}$  is $\false$ and $\tab'$ is closed.
\end{lemma}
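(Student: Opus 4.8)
The plan is to compare $\tab'$ with $\tab$ branch by branch, exploiting the fact that the Separation rule alters $\tab$ only locally. First I would pin down the exact shape of $\tab'$ drawn in Figure \ref{fig.S_2 empty tree Rule 3 .2}. The key preliminary fact, justified by Propositions \ref{prop:disjointdom} and \ref{prop:vardependency}, is that $\dom(\theta)=\dom(\S)$ consists of {\evariable}s introduced at $\nodeB$, hence occurring only in descendants of $\nodeB$; so applying $\theta$ leaves every label outside the subtree of root $\nodeB$ untouched. Thus $\tab'$ arises from $\tab$ by (a) replacing, inside the subtree of $\nodeB$, each label $\Literal{\pred''}{\bar u}{\mathcal{S}''}$ by its $\theta$-image, (b) overriding the label of $\nodeB$ itself with $\lit'=\Literal{\pred}{\tupleA}{\{\sigma'\mid\sigma\in\S_1\}}$, whose tuple is $\tupleA=\tuple\theta$, and (c) appending a single new leaf below $\node$ labelled $\Literal{\pred}{\tuple}{\S_2}$. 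Consequently the branches of $\tab'$ split into the unique new branch $\abranch_0$ ending in that new leaf, and the remaining branches, each of which runs through exactly the same nodes as a uniquely determined branch of $\tab$, now carrying the relabelled literals.

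Next I would prove that every branch in the second group is closed, leaving $\abranch_0$ as the only possible exception. Fix such a branch, corresponding to a branch $\abranch$ of $\tab$; since $\tab$ is closed, $\abranch$ contains $\false$ or a complementary pair. The $\false$ case is immediate, as $\false$-labels are fixed by $\theta$ and $\nodeB$ is not labelled $\false$. So suppose $\abranch$ has nodes $\node_a,\node_b$ labelled $\Literal{\predA}{\bar t_1}{\cdot}$ and $\Literal{\neg\predA}{\bar t_1}{\cdot}$. The observation I would use is that the tuple of any node of $\abranch$ in $\tab'$ equals its original tuple when the node is not a descendant of $\nodeB$, and equals the $\theta$-image of that tuple when it is: for a node strictly below $\nodeB$ this holds because, by Proposition \ref{prop:disjointdom}, $\dom(\theta)$ is disjoint from that node's own substitution domain, so the restriction of $\theta$ in the label-update coincides with $\theta$ itself; for $\nodeB$ it holds because its new tuple is exactly $\tupleA=\tuple\theta$ by step (b); and polarities never change. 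Hence if $\node_a$ and $\node_b$ lie on the same side of $\nodeB$ (both descendants, or both not), their tuples in $\tab'$ are both $\bar t_1$ or both $\bar t_1\theta$, and they still close the branch.

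The delicate case, which I expect to be the main obstacle, is when the pair straddles $\nodeB$, say $\node_a$ a descendant of $\nodeB$ and $\node_b$ not. Since $\abranch$ reaches $\node_a$ it passes through $\nodeB$, so $\node_b$ is a proper ancestor of $\nodeB$ on $\abranch$; by Proposition \ref{prop:vardependency} no \evariable introduced at $\nodeB$, i.e.\ no variable of $\dom(\theta)$, occurs in the label of $\node_b$, whence $\bar t_1\theta=\bar t_1$. Then $\node_a$'s tuple $\bar t_1\theta$ equals $\node_b$'s tuple $\bar t_1$, so the branch is again closed. This shows that all branches except $\abranch_0$ are closed, giving the first two claims at once: there is at most one non-closed branch, and any non-closed branch must be $\abranch_0$, which by construction contains the node labelled $\Literal{\pred}{\tuple}{\S_2}$. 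Finally, if $\S_2=\emptyset$ then this node is $\false$, so $\abranch_0$ is closed as well and $\tab'$ is closed, as required.
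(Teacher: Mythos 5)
Your proof is correct, and it shares the paper's overall skeleton: put the branches of $\tab'$ that avoid the new node labelled $\Literal{\pred}{\tuple}{\S_2}$ in correspondence with the branches of $\tab$, pick a pair of complementary {\mliteral}s closing each such branch of $\tab$ (the paper silently skips the $\false$ case, which you handle explicitly), and verify that this pair still closes the corresponding branch of $\tab'$. Where you diverge is in the bookkeeping of this last step. The paper reasons on the tree of Figure~\ref{fig.S_2 empty tree Rule 3}: every label on the branch is replaced by its $\theta$-image (and $\lit$ by $\lit'$, whose tuple is $\tupleA=\tuple\theta$), so equality of the two closing tuples is preserved for free ($\tuple_1=\tuple_2$ gives $\tuple_1\theta=\tuple_2\theta$, whichever side of $\nodeB$ the two nodes lie on); the only fact needed is Proposition~\ref{prop:disjointdom}, which guarantees that applying $\theta$ to an \mliteral really maps its tuple to its $\theta$-image (the restriction in the definition of substitution application is vacuous because the domains are disjoint). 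You instead commit to the tree of Figure~\ref{fig.S_2 empty tree Rule 3 .2}, where labels outside the subtree of $\nodeB$ are left untouched; this costs you the extra ``straddling'' case, which you correctly discharge with Proposition~\ref{prop:vardependency} (the ancestor's tuple contains no variable of $\dom(\theta)$, hence is fixed by $\theta$). The trade-off: the paper's argument is shorter and needs one proposition instead of two, while yours additionally makes explicit why the two pictures describe the same tableau --- i.e.\ why the ``optimized'' tree that the lemma statement actually refers to coincides with the result of applying $\theta$ globally --- and is therefore somewhat more self-contained.
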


\begin{proof}
Let $\abranch'$ be a branch in $\tab'$ and let $\theta$ be the substitution that we apply to $\tab$ in the Separation rule (see Figure \ref{fig.S_2 empty tree Rule 3 .2}). Assume that $\abranch'$ does not contain $\Literal{L}{\tuple}{\S_2}$. 
Then there is a branch $\abranch$ in $\tab$ where for every literal $\lit_i \in \abranch$ (with $i \in [1,n]$), either  $\lit_i = \lit$ and $\lit'\in \abranch'$, where $\lit'$ replaced $\lit$ during the application of the Separation rule, or $\lit_i\theta \in \abranch'$. Since $\tab$ is closed, $\abranch$ is closed, i.e.\ there are two {\mliteral}s $\litB,\litC\in\abranch$ such that $\litB =\Literal{\predA }{\tuple}{\S}$ and $\litC =\Literal{\neg \predA }{\tuple}{\S'}$. If $\litB\theta, \litC\theta \in \abranch'$, then (by Proposition \ref{prop:disjointdom}) $\theta \cap \dom(\S) = \theta \cap \dom(\S') = \emptyset$, thus $\litB\theta = \Literal{\predA}{\tuple\theta}{\{ \sigma\theta \mid \sigma \in \S \}}$ and $\litC\theta = \Literal{\neg \predA}{\tuple\theta}{\{ \sigma\theta \mid \sigma \in \S' \}}$. Consequently $\abranch'$ is closed. Otherwise, one of the literals $\litB$ or $\litC$ is $\lit$, say $\litB = \lit$, and $\litC\theta \in \abranch'$. Then, by definition of the Separation rule, we have $\lit' = \Literal{\predA}{\tupleA}{\{ \sigma' \mid \sigma \in \S_1\}} = \Literal{\predA}{\tuple\theta}{\{ \sigma' \mid \sigma \in \S_1\}}$.  By Proposition \ref{prop:disjointdom}, $\theta \cap \dom(\S') = \emptyset$, thus $\litC\theta = \Literal{\neg \predA}{\tuple\theta}{\{ \sigma\theta \mid \sigma \in \S' \}}$, hence $\abranch'$ is closed. 
\end{proof}

\begin{theorem}[Soundness]
\label{theo:sound}
If a set of {\mclause}s $\clset$ admits a closed tableau $\tab$ then $\clset$ is unsatisfiable.
\end{theorem}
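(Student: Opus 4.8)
The plan is to prove soundness in contrapositive form by exhibiting a semantic invariant that every construction rule preserves and which makes closedness incompatible with satisfiability of $\clset$. Concretely, for a tableau $\tab$ I associate the formula $D(\tab) \isdef \bigvee_{\abranch} \bigwedge_{\lit \in \abranch} \Formula{\lit}$, the disjunction over all branches $\abranch$ of $\tab$ of the conjunction of the formulas of the literals occurring on $\abranch$ (reading $\false$ as the empty disjunction). The invariant I want to maintain is $\Formula{\clset} \models D(\tab)$, understood for all assignments of the rigid variables and {\evariable}s still free in $\tab$. Granting this, if $\tab$ is closed then every branch is unsatisfiable, so $D(\tab)$ is unsatisfiable, and the invariant forces $\Formula{\clset}$ — hence $\clset$ — to be unsatisfiable, which is the claim.

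First I would establish the invariant by induction on the construction of $\tab$. The base case is the one-node tableau labelled $\true$, for which $D(\tab) = \true$. For \textbf{Expansion}, the new children of a leaf are exactly the literals of a renamed \mclause $\cl' \in \clset$, so Proposition~\ref{prop:cl_preserved} (or directly the rule) gives $\Formula{\clset} \models \Formula{\cl'} = \bigvee_{\lit \in \cl'}\Formula{\lit}$; distributing this disjunction over the branch through the expanded leaf preserves the invariant, and the renaming is harmless since it touches no variable of $\tab$. For \textbf{Instantiation} with $[x\backslash t]$ and $x \in \uVars$, I would use that $D(\tab[x\backslash t]) = D(\tab)[x\backslash t]$ and that $x$ does not occur in $\Formula{\clset}$, so instantiating the valid entailment $\Formula{\clset} \models D(\tab)$ keeps it valid; the side condition of the rule together with Proposition~\ref{prop:vardependency} guarantees the substitution is well defined and does not disturb the scoping of the {\evariable}s in $t$. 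For \textbf{Separation}, the computation already carried out in the proof of Proposition~\ref{prop:cl_preserved} shows that the formula attached to the affected node $\node$ is unchanged, namely $\Formula{\lit} = \Formula{\lit'} \vee \Formula{\Literal{\pred}{\tuple}{\S_2}}$, while $\theta$ only instantiates {\evariable}s confined to the subtree of root $\nodeB$; hence the branch-disjunction is preserved up to this harmless instantiation, and Lemma~\ref{lem.lemma Soundness} describes exactly how the single new branch carrying $\Literal{\pred}{\tuple}{\S_2}$ behaves.

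It then remains to check the closure step: each closed branch $\abranch$ contributes an unsatisfiable conjunct to $D(\tab)$. If $\abranch$ contains $\false$ this is immediate. Otherwise $\abranch$ carries two nodes labelled $\Literal{\predA}{\tuple}{\S_1}$ and $\Literal{\neg \predA}{\tuple}{\S_2}$ with identical tuple $\tuple$, and I must argue that $\bigvee_{\theta \in \S_1}\predA(\tuple\theta) \wedge \bigvee_{\rho \in \S_2}\neg\predA(\tuple\rho)$ is unsatisfiable. I expect this to be the main obstacle, because in general the two substitution sets are nontrivial and, by Proposition~\ref{prop:disjointdom}, have disjoint domains, so a naive reading does not give a direct clash. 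The key is that closure is used only once the shared literals have been separated down to ordinary literals, where $\S_1$ and $\S_2$ are trivial (empty domain) and the two formulas reduce to the genuinely complementary $\predA(\tuple)$ and $\neg\predA(\tuple)$; I would make this precise by showing, using Proposition~\ref{prop:vardependency} and the scoping of {\evariable}s, that any {\evariable} occurring in the common tuple $\tuple$ is bound consistently along the branch, so that the two node-formulas are indeed mutually exclusive. Handling this coordinated interpretation of {\evariable}s across a branch — rather than the per-literal reading of $\Formula{\cdot}$ — is the delicate point; once it is settled, the three steps combine to give $\Formula{\clset}\models\false$ and thus the unsatisfiability of $\clset$.
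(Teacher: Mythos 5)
Your overall strategy---maintaining the semantic invariant $\Formula{\clset}\models D(\tab)$ with rigid variables and {\evariable}s read universally---is genuinely different from the paper's proof, which is syntactic rather than semantic: the paper repeatedly applies the Separation rule with a singleton $\S_1=\{\theta\}$ to the topmost shared literal, grafts a renamed copy of the affected subtree onto the new branch carrying $\Literal{\pred}{\tuple}{\S_2}$, keeps the tableau closed via Lemma~\ref{lem.lemma Soundness}, proves termination of this unfolding by a multiset ordering, and then invokes soundness of ordinary tableaux. Your invariant is indeed preserved by Expansion, Instantiation and Separation roughly as you sketch (for Separation, a branch through $\nodeB$ splits into two disjuncts of the new $D(\tab')$, and the universal reading of the variables in $\dom(\S)$ is what absorbs the instantiation $\theta$ of the subtree below $\nodeB$). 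So the skeleton is workable, and arguably more self-contained than the paper's, since it does not presuppose soundness of ordinary tableaux.

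The gap is in the closure step, and the statement you call ``the key'' there is false. Closure in this calculus does \emph{not} require that the shared literals have been separated down to ordinary literals: a branch is closed as soon as two nodes carry dual predicate symbols with syntactically equal tuples, whatever $\S_1$ and $\S_2$ are, and the paper's own example closes a branch between $\Literal{\predB}{(f\delta ,\gamma^*)}{\{ [(\delta ,\gamma^* )\backslash (c,c)],[(\delta ,\gamma^* )\backslash (c,ffx')]\}}$ and $\Literal{\neg \predB}{(f\delta ,\gamma^*)}{\{ \id \}}$, where the positive literal still carries two substitutions. So you cannot assume the two formulas reduce to complementary $\predA(\tuple)$ and $\neg\predA(\tuple)$. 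Worse, for two literals with dual predicates, equal tuples, and nonempty disjoint substitution domains, the conjunction $\bigvee_{\theta\in\S_1}\predA(\tuple\theta)\wedge\bigvee_{\rho\in\S_2}\neg\predA(\tuple\rho)$ can be satisfiable even under universal closure, so closedness of a branch does not by itself make its conjunct unsatisfiable; a structural property of tableaux must be invoked. The missing argument is this: in any closing pair, at most one of $\dom(\S_1)$, $\dom(\S_2)$ is nonempty. Indeed, if $\evar\in\dom(\S_1)$ then $\evar\in\EV{\tuple}$ by the syntax of {\mliteral}s, so $\evar$ occurs in the other node's literal, and Proposition~\ref{prop:vardependency} makes that node a proper descendant of the first; were both domains nonempty, each node would be a proper descendant of the other, a contradiction. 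Say $\dom(\S_2)=\emptyset$; then either $\S_2=\emptyset$ (the conjunct contains $\false$) or $\S_2=\{\id\}$, and the universally quantified variables of $\dom(\S_1)$ occurring in $\neg\predA(\tuple)$ may be instantiated by every $\theta\in\S_1$, yielding $\bigwedge_{\theta\in\S_1}\neg\predA(\tuple\theta)$, which clashes propositionally with $\bigvee_{\theta\in\S_1}\predA(\tuple\theta)$. With this supplied, your proof goes through; as written, the closure case rests on a mischaracterization of the calculus and your appeal to ``consistent binding along the branch'' does not substitute for it.
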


\begin{proof}

We prove soundness by transforming $\tab$ into a closed tableau that contains only {\mliteral}s where the substitution set is $\{\id\}$. Due to Proposition \ref{prop:cl_preserved}, the resulting tableau then corresponds to an ordinary tableau, i.e.\ a tableau constructed only by the Expansion rule and the Instantiation rule. The {\mliteral}s could be replaced by usual literals. The soundness of ordinary tableau then implies the statement.

We transform the tableau $\tab$ by an iterative procedure. We always take the topmost node $\node$ labelled with an \mliteral $\lit =\Literal{\pred}{\tuple}{\S}$ where $\S\neq\{\id\}$ (if there is more than one topmost literal we can arbitrarily choose one). Then we consider a substitution $\theta\in \S$ and we apply the Separation rule with the tuple $\tuple\theta$. We have $\tuple\theta=\tuple\theta\id$ and if $\sigma\in \S \setminus \{ \theta \}$, then $\tuple\sigma\not = \tuple\theta$, hence 
there is no substitution $\sigma'$ with $\tuple\sigma\sigma' = \tuple\theta$, such that $\dom(\sigma')$ only contains  fresh variables. Consequently, the rule splits $\S$ into  a singleton $\S_1 = \{\theta\}$ and $\S_2=\S_1\setminus \{\theta\}$. The literal $\lit$ gets replaced by $\lit' =\Literal{\pred}{\tuple\theta}{\{\id\}}$ and we add the node $\nodeB$ labelled with $\Literal{\pred}{\tuple}{\S_2}$ (note that $\node$ cannot be the root of the tableau, because the root is always labelled by $\true$). By Lemma \ref{lem.lemma Soundness}, there is at most one non-closed branch, i.e.\ the branch ending with the node $\nodeB$ is the only open branch. We consider a copy $\tab_{\node}$ of the subtree of root $\node$ in $\tab$, renaming all variables introduced in $\tab_{\node}$ by fresh variables. We replace the root node of $\tab_{\node}$ by $\Literal{\pred}{\tuple}{\S_2}$ and replace the subtree of root $\nodeB$ in the tableau by $\tab_{\node}$. It is easy to check that the obtained tableau is a closed tableau for $\clset$. Furthermore, the length of the branches does not increase, the number of non-empty substitutions occurring in the {\mliteral}s does not increase, and it decreases strictly in $\lit'$ and $\Literal{\pred}{\tuple}{\S_2}$.
This implies that the multiset of multisets $\{ |\{ \sigma \in \S_1' \mid \sigma \not = \id \}|,\dots,|\{ \sigma\in \S_n' \mid \sigma\not=\id \}| \}$  of natural numbers, where $\{ \Literal{\pred_i}{\tuple_i}{\S_i'} \mid i \in [1,n] \}$ is a branch in $\tab$ is strictly decreasing according to the multiset extension of the usual ordering. Since this ordering is well-founded, the process eventually terminates, and after a finite number of applications of this procedure we get a tableau only containing nodes labelled with {\mliteral}s whose substitution set is equal to $\{\id\}$. 
This finishes the proof.
\end{proof}

\begin{remark}
\label{rem:bound}
As a by-product of the proof, we get that the size of the minimal ordinary tableau for a clause set $\Formula{\clset}$
is bounded exponentially by the size of any closed tableau for $\clset$. Indeed, we constructed an ordinary tableau from an \tableau $\tab$ in which every branch $(\lit_1,\dots,\lit_n)$ in $\tab$ is replaced by 
(at most) $k^n$ branches, where $k$ is the maximal number of substitutions in $\lit_i$.
In Section \ref{sect:exp}, we shall prove that this bound is precise, i.e.\ that our tableau calculus allows exponential reduction of proof size w.r.t.\ ordinary (cut-free) tableaux.
\end{remark}

\section{Completeness}

Proving completeness of \tableau is actually a trivial task, since one could always apply the Separation rule in a systematic way on all {\mliteral}s to transform them into ordinary literals (as it is done in the proof of Theorem \ref{theo:sound}), and then get the desired result by completeness of ordinary tableau. However, this strategy would not be of practical use. Instead, we shall devise a strategy that keeps the {\tableau} as compact as possible and at the same time allows one to ``simulate'' any application of the ordinary expansion rules. In this strategy, the Separation rule is applied on demand, i.e.\ only when it is necessary to close a branch. No hypothesis is assumed on the application of the ordinary expansion rules, therefore the proposed strategy is ``orthogonal'' to the usual refinements of ordinary tableaux, for instance connection tableaux\footnote{Connection tableaux can be seen as ordinary tableaux in which any application of the Expansion rule must be followed by the closure of a branch, using one of the newly added literals and the previous literal in the branch.} \cite{778537} or hyper-tableaux\footnote{Hyper-tableaux may be viewed in our framework as ordinary tableaux in which the Expansion rule must be followed by the closure of all the newly added branches containing negative literals.} \cite{Baumgartner:HyperNextGeneration:Tableaux:98}. Thus our approach can be combined with any refutationally complete tableau procedure \cite{Haehnle:HandbookAR:tableaux:2001}.

The main idea denoted by \emph{simulate a strategy} is to do the same steps as in ordinary tableau, while keeping {\mclause}s as compressed as possible. If ordinary tableau expands the tableau by a clause, we expand the tableau with the corresponding \mclause, and if a branch is closed in the ordinary tableau, then the corresponding branch is closed in the \tableau. This last step is not trivial:
Given two ordinary literals $\predA(\tuple)$ and $\neg \predA(\tupleA)$ the ordinary tableau might compute the most general unifier (mgu) of $\tuple$ and $\tupleA$. But in the presented formalism, the two literals might not appear as such, i.e.\ there are no literals $\litB=\Literal{\predA}{\tuple}{\{\id\}}$ and $\litC=\Literal{\neg\predA}{\tupleA}{\{\id\}}$ . In general, there are only {\mliteral}s $\litB' =\Literal{\predA}{\tuple'}{\S_1}$ and $\litC' =\Literal{\neg\predA}{\tupleA'}{\S_2}$ such that $\tuple'\theta = \tuple$ and $\tupleA'\vartheta = \tupleA$, where $\theta$ and $\vartheta$ denote the compositions of the substitutions occurring in the {\mliteral}s in the considered  branch. Note also that, although  $\tuple'$ and $\tupleA'$ are unifiable, the Instantiation rule cannot always be applied to unify them and close the branch. Indeed, the domain of the mgu may contain {\evariable}s, whereas the Instantiation rule only handles universal variables. For showing completeness, it would suffice to apply the Separation rule on each ancestor of $\litC'$ and $\litB'$ involved in the definition of $\theta$ or $\vartheta$, to create a branch where the literals $\litB$ and $\litC$ appear explicitly. Thereby, we would loose a lot of the formalisms benefit. Instead, we shall introduce a strategy that uses the Separation rule only if this is necessary for making the unification of $\tuple'$ and $\tupleA'$ feasible (by mean of the Instantiation rule). Such applications of the Separation rule may be seen as preliminary steps for the Instantiation rule. This follows the maxim to stay as general as possible because a more general proof might be more compact.

In the formalisation of the Instantiation rule we ensured soundness by allowing {\evariable}s only to occur in descendants of the literal that introduced the variable. This has a drawback to our strategy: The unification process that we try to simulate can ask for an application of the Instantiation rule which would cause a violation of this condition for {\evariable}s if we follow the procedure in the former paragraph. We thus have to add further applications of the Separation rule to
ensure that this condition is fulfilled.

\begin{definition}
\label{def:conflict}
Let $\abranch = (\node_0,\node_1,\dots,\node_n)$ be a path in a tableau $\tab$ where $\node_0$ is the initial node of $\tab$ and each node $\node_i$ (with $i > 0)$ is labelled by $\Literal{\pred_i}{\tuple_i}{\S_i}$.

An {\em \esubst} for $\abranch$ is a substitution $\eta_n\dots\eta_1$ with $\eta_i \in \S_i$, for $i=1,\dots,n$.

A {\em \conflict} in a branch $\abranch$ is a pair $(\tuple_i,\tuple_j)$ with $i,j \in [1,n]$, $\pred_i$ and $\pred_j$ are dual and $\tuple_i$ and $\tuple_j$ are unifiable. A \conflict is {\em $\eta$-\realizable} if $\eta$ is an  \esubst  for $\abranch$ such that $\tuple_i\eta$ and $\tuple_j\eta$ are unifiable. 
\end{definition}

In practice, we do not have to check that a \conflict is \realizable (this would be costly since we have to consider exponentially many substitutions).

If $(\tuple,\tuple')$ is a \conflict then $\tuple$ and $\tuple'$ are necessarily unifiable, with some mgu $\theta$. As mentioned before, this does not mean that a branch with \conflict can be closed. Moreover, according to the restriction on the Instantiation rule, a variable $x$ cannot be instantiated by a term containing an \evariable $\evar$, if $x$ occurs in some ancestor of the literal introducing $\evar$ in the tableau. This motivates the following:

\begin{definition}
A variable $\evar\in\eVars$ is {\em \blocking} for a \conflict $(\tuple,\tuple')$, where $\theta=\mgu(\tuple,\tuple')$ if 
 $\evar\in \dom(\theta)$ or $\evar$ occurs in a term $x\theta$, where $x \in \uVars$ and $x$ occurs in a literal labelling an ancestor of the node introducing $\evar$.
\end{definition}

Finally, we introduce a specific application of the Separation rule 
which allows one either to ``isolate'' some literals in order to ensure that they have a specific ``shape'' (as specified by a substitution), or to eliminate {\evariable}s completely if needed.

\begin{definition}
\label{def:compact}
If $\sigma$ is a substitution, we denote by $\domev(\sigma)$ the set of variables $\evar\in \eVars$ such that
$\evar\sigma\not \in \eVars$.

A tableau is {\em \compact} if it is constructed by a sequence of applications of the tableau rule in which:

\begin{itemize}
\item{The Instantiation rule is applied only if the tableau contains a branch with a \conflict $(\tuple,\tuple')$, with no \blocking variable. Each variable $x\in \dom(\sigma)$ is replaced by a term $t\sigma$, where $\sigma=\mgu(\tuple,\tuple')$ (since there is no \blocking variable it is easy to check that the conditions on the Instantiation rule are satisfied). Afterwards, it is clear that the branch is closed.}

\item{The Separation rule is  applied on a node labelled by $\Literal{\pred}{\tuple}{\S}$, using a substitution $\theta$  only if there exists a \conflict $(\tuple,\tuple')$ with $\sigma = \mgu(\tuple,\tuple')$ such that one of the following conditions holds:
\begin{enumerate}
\item{$\dom(\S)$ contains a \blocking variable in $\domev(\sigma)$ and $\theta$ is defined as follows: $\dom(\theta) = \dom(\S) \cap \domev(\sigma)$, $x\theta=x\sigma$ if $x\sigma$ is a variable, otherwise $x\theta$ is obtained from $x\sigma$ by replacing all variables by pairwise distinct fresh {\evariable}s. \label{sep:inst}}
\item{Or $\dom(\S)$ contains a \blocking variable not occurring in $\domev(\sigma)$, and $\theta \in \S$. \label{sep:elim}}
\end{enumerate}}
\end{itemize}
\end{definition}

The applications of the Separation rule in Definition \ref{def:compact}
are targeted at making the closure of the branch possible by getting rid of \blocking variables, while keeping the tableau as compact as possible (thus useless separations are avoided).

\begin{example}
For instance, assume that we want to close a branch containing two literals 
$\lit = \Literal{\pred}{(\evar)}{\{[\evar\backslash f(a)],[\evar\backslash f(b)],[\evar\backslash a]\}}$
and
$\lit' = \Literal{\neg \pred}{f(x)}{\{ \id \}}$. To this aim, we need to ensure that $\evar$ is unifiable with $f(x)$. This is done by
applying the separation rule (Case \ref{sep:inst}  of Definition \ref{def:compact}) with the substitution $[\evar\backslash f(\evarB)]$, so that $\evar$ has the desired shape. This yields: $\Literal{\pred}{(f(\evarB))}{\{[\evarB\backslash a],[\evarB\backslash b]\}}$
Afterwards, if $x$ does not occur before $\lit$ in the branch then the branch is closed by unifying $x$ with $\evarB$.
If $x$ occurs before $\lit$, then this is not feasible since this would contradict the condition on the Instantiation rule, and we have to apply the Separation rule  again  (Case \ref{sep:elim}) to eliminate $\evarB$, yielding (for instance) $\Literal{\pred}{(f(a))}{\{ \id \}}$. The direct application of the Separation rule with, e.g., $\theta = [\evar\backslash f(a)]$ is forbidden in the strategy.
\end{example}

\begin{theorem}[Completeness]
Let $\oclset$ be a clause set and $\clset$ be an \mclause set with $\Formula{\clset}\ident\oclset$. If $\oclset$ is unsatisfiable, then there is a closed \compact tableau for $\clset$.
\end{theorem}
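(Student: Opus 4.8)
The plan is to reduce to the completeness of \emph{ordinary} clausal tableaux and then \emph{simulate} a closed ordinary tableau by a \compact tableau for $\clset$. Since $\oclset$ is unsatisfiable and $\Formula{\clset}\ident\oclset$, completeness of ordinary tableaux provides a closed ordinary tableau $\tab_o$ for $\oclset$, together with the rigid-variable instantiation that closes all its branches simultaneously. I would build $\tab$ by following the construction of $\tab_o$ top-down, keeping literals grouped where possible, and maintaining the invariant that $\tab$ is a \emph{compressed cover} of the processed part of $\tab_o$: each ordinary branch of $\tab_o$ is recovered from some branch $\abranch$ of $\tab$ by fixing an \esubst{} $\eta$ for $\abranch$ (Definition \ref{def:conflict}) and then applying the rigid instantiation, so that its literals are exactly the $\pred_i(\tuple_i\eta)$ read off the {\mliteral}s $\Literal{\pred_i}{\tuple_i}{\S_i}$ of $\abranch$. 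A single branch of $\tab$ thus compresses the whole family of ordinary branches obtained by letting $\eta$ vary.

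The Expansion steps are immediate. Whenever $\tab_o$ expands a leaf with an ordinary clause $\ocl$, the hypothesis $\Formula{\clset}\ident\oclset$ yields an \mclause{} $\cl\in\clset$ with $\Formula{\cl}\ident\ocl$; I apply the Expansion rule with a fresh renaming of $\cl$ at the corresponding leaf of $\tab$. Its {\mliteral}s group the literals of $\ocl$ by predicate, so the new branches of $\tab$ jointly cover the ordinary branches created by the expansion, and Proposition \ref{prop:cl_preserved} ensures that the clausal content is preserved. The only real work lies in the closures.

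A closure of an ordinary branch of $\tab_o$, unifying complementary literals $\predA(\tuple)$ and $\neg\predA(\tupleA)$, corresponds through the invariant to a \conflict{} $(\tuple',\tupleA')$ in the covering branch $\abranch$ of $\tab$, with $\sigma=\mgu(\tuple',\tupleA')$; by construction this conflict is $\eta$-\realizable{} for the relevant $\eta$. To close $\abranch$ inside the \compact strategy I must reach the situation of the first item of Definition \ref{def:compact}, namely a conflict with \emph{no} blocking variable, and then fire the Instantiation rule. I therefore first remove the blocking {\evariable}s of the conflict using the only Separation steps the strategy allows: Case \ref{sep:inst} rewrites a blocking variable $\evar\in\domev(\sigma)$ so that it acquires the shape of $\evar\sigma$ (its proper subterms being replaced by fresh {\evariable}s), while Case \ref{sep:elim} isolates a concrete value $\theta\in\S$ for a blocking variable $\evar\notin\domev(\sigma)$, deleting $\evar$ altogether. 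Iterating, each step either makes a blocking variable strictly less abstract or removes one, so a well-founded measure on the blocking {\evariable}s (counting them with their term depth) decreases and the process halts at a conflict with no blocking variable. At that point the side conditions of the Instantiation rule hold — the ancestor condition of Proposition \ref{prop:vardependency} can no longer be violated — so the \compact Instantiation rule fires and closes the branch.

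Each such Separation splits the affected \mliteral{} $\Literal{\pred}{\tuple}{\S}$ into the isolated part $\lit'$ and a residual $\Literal{\pred}{\tuple}{\S_2}$, adding a single new branch (the residual one), which covers precisely those ordinary branches not closed by the current conflict; I recurse on this residual branch, simulating the (possibly different) closures that $\tab_o$ performs on them. \textbf{The main obstacle is this interplay between compression and closure:} one branch of $\tab$ stands for many ordinary branches that $\tab_o$ may close in different ways under a single globally consistent rigid instantiation, so the crux is to show (i) that the restricted Separation steps of Definition \ref{def:compact} always suffice to carve a branch into sub-bundles each closable by a single Instantiation, (ii) that removing blocking variables never destroys the closability of the ordinary branches still covered, and (iii) that the whole simulation terminates. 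For (iii) I would argue that every step either consumes one Expansion or one closure of the finite tableau $\tab_o$, or strictly decreases the blocking-variable measure, while each residual $\S_2$-branch carries strictly fewer grouped substitutions still to be processed; well-foundedness then gives termination, and since all leaves are ultimately closed we obtain the desired closed \compact tableau for $\clset$.
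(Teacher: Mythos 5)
Your overall architecture---obtain a closed ordinary tableau by completeness of ordinary tableaux, mirror its Expansions, simulate each ordinary Closure by \compact Separation steps that remove \blocking variables followed by a single Instantiation, and terminate by a well-founded measure---is the same as the paper's. But there is a genuine gap, and it sits exactly where you place it yourself: your items (i) and (ii) are stated as obligations and never discharged, and it is your choice of invariant that makes them hard. You maintain a \emph{cover} invariant (every ordinary branch of $\tab_o$ is recovered from some branch of $\tab$ by choosing an \esubst), so one branch of $\tab$ must simultaneously answer for many ordinary branches that $\tab_o$ closes with \emph{different} conflicts, and you must then prove that the restricted Separations of Definition \ref{def:compact} can always carve such a bundle apart without losing closability. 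The paper orients the invariant the other way: it maintains an \emph{injective} map $h$ from nodes of $\tab$ to nodes of $\otab$, one chosen substitution $\subS{\node}\in \S$ per node, and a rigid substitution $\vartheta$, such that $h(\node_n)$ is labelled by $\pred(\tuple)\subS{\node_n}\cdots\subS{\node_1}\vartheta$ (property \keyprop). Each branch of $\tab$ thus tracks exactly \emph{one} ordinary branch and needs to be closed only when \emph{that} branch is closed, using that single conflict; compression appears as non-surjectivity of $h$ (many ordinary branches have no preimage), not as a many-to-one covering that must be disentangled. Under \keyprop the closure case becomes concrete substitution bookkeeping: writing $\eta = \subS{\node_n}\cdots\subS{\node_1}$, the conflict in $\tab$ is $\eta$-\realizable, $\eta\vartheta\sigma$ unifies it and hence factors through its mgu $\sigma'$; the substitutions $\theta$ of Cases \ref{sep:inst} and \ref{sep:elim} are more general than $\eta$, so each Separation preserves \keyprop; and after the Instantiation one factors $\eta\vartheta\sigma = \sigma'\eta\vartheta'$ to define the new rigid substitution $\vartheta'$. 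None of this bookkeeping appears in your proposal, and it is the actual content of the proof.

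Two concrete consequences of the missing argument. First, your point (ii) genuinely fails as you have set things up: after a Case \ref{sep:inst} Separation, the subtree below the separated node is instantiated by $\theta$ and hangs under $\lit'$, while the residual child $\Literal{\pred}{\tuple}{\S_2}$ is a \emph{leaf}; the ordinary branches you assigned to the residual side are no longer covered by full branches of $\tab$ (all literals below them are gone), so ``recursing on the residual branch'' silently requires replaying Expansions of $\tab_o$ below that leaf. This also invalidates your termination measure (``every step consumes one Expansion or one closure of $\tab_o$''), since expansions may be re-consumed once per residual bundle; the paper instead bounds the size of $\tab$ by that of $\otab$ via injectivity of $h$ and notes that each Separation either grows $\tab$ or strictly decreases the number of {\evariable}s. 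Second, closing one bundle applies a substitution to the \emph{whole} tableau; without the $\vartheta$-factorization above you have no argument that these global instantiations keep the ordinary branches still covered by other bundles closable. So what you have is a correct plan whose crux is named but not proven; the direct way to complete it is to replace the cover invariant by the paper's per-branch tracking invariant \keyprop.
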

\begin{proof}
For an \mclause $\cl_i\in\clset$, we denote by $\ordinary{\cl}_i$ the ordinary clause in $\oclset$ with $\Formula{\ordinary{\cl}_i}\simeq\Formula{\cl_i}$. W.l.o.g.\ we can assume that clauses do not appear twice, neither in $\clset$ nor in $\oclset$. Now, we can perform a proof search based on ordinary tableau starting with $\oclset$ (using any complete strategy) yielding a closed tableau, built by applying the usual Expansion and Closure rules. In the following,  $\otab$ will denote an already constructed ordinary tableau for $\oclset$ and $\tab$ will represent the corresponding \tableau for $\clset$. More precisely, the tableau $\tab$ is constructed in such a way that there exists an injective mapping $h$ from the nodes in $\tab$ to those in $\otab$, a function $\node \mapsto \eta_\node$ mapping each node $\node$ in $\tab$ labelled by some literal $\Literal{\pred}{\tuple}{\S}$ to an \esubst $\eta$  and a substitution $\vartheta$ (with $\dom(\vartheta) \subseteq \uVars$)  such that the following property holds (denoted by \keyprop):
\begin{enumerate}
\item{The root of $\tab$ is mapped to the root of $\otab$.}
\item{If $\node$ is a child of $\node'$ then $h(\node)$ is a child of $h(\node')$.}
\item{For any path $(\node_0,\dots,\node_n)$, if $\node_n$ is labelled by an \mliteral $\Literal{\pred}{\tuple}{\S}$, then $h(\node_n)$ is labelled by 
a literal $\pred(\tuple)\subS{\node_n}\dots\subS{\node_1}\vartheta$.}
\item{If a branch of the form $(h(\node_0),\dots,h(\node_n))$ is closed in $\otab$, then $(\node_0,\dots,\node_n)$ is closed in $\tab$.}
\end{enumerate}
Note that the mapping is not surjective in general ($\otab$ may be bigger than $\tab$). By \keyprop, if $\otab$ is closed then $\tab$ is also closed, which gives us the desired result. It is easy to check that applying the Separation rule on a node $\node_i$ in a branch $(\node_0,\dots,\node_n)$ in $\tab$ preserves \keyprop, provided it is applied  using a substitution $\theta$ (as defined in the Separation rule) that is more general than $\subS{\node_n}\circ\dots\circ\subS{\node_1}$.

The tableau $\tab$ is constructed inductively as follows. For the base case, we may take $\tab = \otab = \true$ and \keyprop trivially holds.

\textbf{Expansion:} The Expansion rule of ordinary tableaux allows one to expand the tableau by an arbitrary clause $\ordinary{\cl}_i$ of $\oclset$. We define the corresponding tableau for $\clset$ as the tableau expanded by $\cl_i$. The mappings $h$ and $\subS{\node}$ may be extended in a straightforward way so that \keyprop is preserved (the unique new node $\node$ in $\tab$ may be mapped to an arbitrary chosen new node in $\otab$).

\textbf{Closure:}
Assume that a branch in $\otab$ is closed by applying some substitution $\sigma$, using two literals $\pred(\tuple)$ and $\neg \pred(\tupleA)$, where $\sigma = \mgu(\tuple,\tupleA)$. If one of these literals do not occur in the image of a branch of $\tab$ then it is clear that the operation preserves \keyprop (except that the substitution $\vartheta$ is replaced by $\vartheta\sigma$), hence no further transformation is required on $\tab$. Otherwise, by  \keyprop, there is a branch $\abranch = (\node_0,\dots,\node_n)$ in $\tab$ where for every $i\in [0,n]$, $\node_i$ is labelled by $\lit_i =\Literal{\pred_i}{\tuple_i'}{\S_i}$, two numbers $j,k \in \mathbb{N}$ such that $\pred_j=\pred$, $\pred_k = \neg \pred$, $\tuple_j'\eta = \tuple$ and $\tuple_k'\eta=\tupleA$, with $\eta = \subS{\node_n}\dots\subS{\node_1}$.

By definition, $(\tuple_j',\tuple_k')$ is an $\eta$-\realizable \conflict. Let $\sigma'$ be the mgu of $\tuple_j'$ and $\tuple_k'$. If there is no \blocking variables for  $(\tuple_j',\tuple_k')$, then the Instantiation rule applies, replacing every variable $x\in \dom(\sigma')$ by $x\sigma'$, and the branch may be closed. By definition $\eta\vartheta\sigma$ is a unifier of $\tuple$ and $\tupleA$, hence $\eta\vartheta\sigma = \sigma'\theta'$, for some substitution $\theta'$. By definition, the co-domain of $\eta$ contains no {\evariable}s, thus $\theta' = \eta\vartheta'$, with $\dom(\vartheta') \subseteq \uVars$, and $\vartheta\sigma = \sigma'\vartheta'$.
The application of the rule preserves \keyprop, where $\vartheta$ is replaced by the substitution $\vartheta'$. Indeed, consider a node $\node$
in $\tab$, initially labelled by a literal $\Literal{\pred}{\tuple}{\S}$, where $h(\node)$ is labelled by $\pred(\tuple)\subS{\node_n}\dots\subS{\node_1}\vartheta$.
After the rule application, $\node$ is labelled by $\Literal{\pred}{\tuple\sigma'}{\S\sigma'}$, $h(\node)$ is labelled by $\pred(\tuple)\subS{\node_n}\dots\subS{\node_1}\vartheta\sigma$, and the substitutions $\subS{\node_i}$ are replaced by $\subS{\node_i}\sigma'$. Since $\dom(\sigma') \cap \eVars = \emptyset$, it is clear that $\sigma'\subS{\node_n}\dots\subS{\node_1}\sigma'\vartheta '=\subS{\node_n}\dots\subS{\node_1}\sigma'\vartheta '=\subS{\node_n}\dots\subS{\node_1}\vartheta\sigma$.

Otherwise, the set of \blocking variables is not empty, and since all {\evariable}s occurring in the tableau must be introduced in some node, there exists $l\in [1,n]$ such that $\dom(\S_l)$ contains a \blocking variable. According to Definition \ref{def:compact}, the Separation rule may be applied on $\node_l$ (it is easy to check that
all the application conditions of the rule are satisfied). In Case \ref{sep:inst} (of Definition \ref{def:compact}), $\theta$ is more general than  $\sigma'$ by definition, and since $\subS{\node_n}\circ\dots\circ\subS{\node_1}\vartheta\sigma$ is a unifier of $\tuple_j'$ and $\tuple_k'$, the mgu $\sigma'$ must be more general than $\subS{\node_n}\circ\dots\circ\subS{\node_1}$. In Case \ref{sep:elim}, we can take $\theta = \subS{\node_l}$, which is more general than $\subS{\node_n}\circ\dots\circ\subS{\node_1}$ (since the substitutions $\subS{\node}$ have disjoint domains). Thus the property \keyprop is preserved.

This operation is repeated until the set of \blocking variables is empty, which allows us to apply the Instantiation rule as explained before. The process necessarily terminates since each application of the Separation rule either increases the size of the tableau (either by adding new nodes, or by instantiating a variable by a non variable term), or does not increase the size of the tableau but strictly reduces the number of {\evariable}s. Furthermore, by \keyprop, the size of $\tab$ is smaller than that of $\otab$.
\end{proof}

\section{An Exponentially Compressed Tableau}

\label{sect:exp}

In this section, we will show that the presented method is able to compress tableaux by an exponential factor. This corresponds to an introduction of a single $\Pi_2$-cut\footnote{In a $\Pi_2$-cut, the cut formula is of the form $\forall x\exists yA$ where $A$ is a quantifier-free formula.} (see \cite{DBLP:journals/tcs/LeitschL18}). As a simplified measurement of the size of a tableau we consider the number of nodes. Let us consider the schema of \mclause sets $\clset^n\isdef\{\{\lit_1^n\} ,\{\lit_2,\lit_3\} ,\{\lit_4^n\}\}$ with
\begin{align*}
\lit_1^n= &\; \Literal{\predA}{\tuples{\alpha}}{[\tuples{\alpha}\backslash (x,f_1x)],\ldots ,[\tuples{\alpha}\backslash (x,f_nx)]} \\
\lit_2= &\; \Literal{\neg\predA}{\tuples{\beta}}{[\tuples{\beta}\backslash (x,y)]} \\
\lit_3= &\; \Literal{\predA}{\tuples{\gamma}}{[\tuples{\gamma}\backslash (x,fy)]} \\
\lit_4^n= &\; \Literal{\neg\predA}{\tuples{\delta}}{[\tuples{\delta}\backslash (fx_1,fx_2)],\ldots ,[\tuples{\gamma}\backslash (fx_{n-1},fx_n)]}
\end{align*}
where $\tuples{\alpha}=(\alpha_1,\alpha_2),\tuples{\beta}=(\beta_1,\beta_2),\tuples{\gamma}=(\gamma_1,\gamma_2)$, and $\tuples{\delta}=(\delta_1,\delta_2)$ for $n\in\mathbb{N}$. Then we can construct a closed tableau for $\clset^ n$ whose size is linear w.r.t.\ $n$:
{\small
\begin{center}
\begin{forest}
[{$\Literal{\predA }{(fx_1,\alpha_2^1)}{\{ [\alpha_2^1\backslash f_1fx_1],\ldots ,[\alpha_2^1\backslash f_nfx_1]\}}$}
 [{$\Literal{\neg \predA }{(fx_1,\alpha_2^1)}{\idset }$} ]
 [{$\Literal{\predA }{(fx_1,f\alpha_2^1)}{\idset }$} 
  [{$\Literal{\predA }{(f\alpha_2^1,\alpha_2^2)}{\{ [\alpha_2^2\backslash f_1f\alpha_2^1],\ldots ,[\alpha_2^2\backslash f_nf\alpha_2^1]\}}$}
   [{$\Literal{\neg \predA }{(f\alpha_2^1,\alpha_2^2)}{\idset }$} ]
   [{$\Literal{\predA }{(f\alpha_2^1,f\alpha_2^2)}{\idset }$} 
    [{$\tab$}, edge=dotted ] ] ] ] ]
\end{forest}
\end{center}
}
where $\tab$ is
{\small
\begin{center}
\begin{forest}
[{$\Literal{\predA }{(f\alpha_2^{n-2},\alpha_2^{n-1})}{\{ [\alpha_2^{n-1}\backslash f_1f\alpha_2^{n-2}],\ldots ,[\alpha_2^{n-1}\backslash f_nf\alpha_2^{n-2}]\}}$}
 [{$\Literal{\neg \predA }{(f\alpha_2^{n-2},\alpha_2^{n-1})}{\idset }$} ] 
 [{$\Literal{\predA }{(f\alpha_2^{n-2},f\alpha_2^{n-1})}{\idset }$} 
  [{$\Literal{\neg \predA }{(fx_1,f\alpha_2^1)}{\idset}$} ] 
  [{$\ldots$}, edge=dotted ]
  [{$\Literal{\neg \predA }{(f\alpha_2^{n-2},f\alpha_2^{n-1})}{\idset}$} ] ] ]
\end{forest}
\end{center}
}
The clause set schema $\clset^n$ is a simplified variant of the example in \cite[Section 3 and 9]{DBLP:journals/tcs/LeitschL18} and one can easily verify that an ordinary tableau method is of exponential size of $n$. Just consider the term instantiations which are necessary for an ordinary tableau:
\begin{align*}
&\; x\leftarrow\{ fx_1,ff_{i_1}fx_1,\ldots ,ff_{i_{n-2}}f\ldots f_{i_1}fx_1\vert i_1,\ldots ,i_{n-2}\in [1,n]\}\text{ in }\lit_1^n, \\
&\; (x,y)\leftarrow\{(t,f_it)\vert i\in [1,n]\land t\text{ is a substitution for }x\text{ in }\lit_1^n\}\text{ in }\lit_2\text{ and }\lit_3,\text{ and } \\
&\; (x_1,\ldots ,x_n)\leftarrow\{(x_1,f_{i_1}fx_1,\ldots ,f_{i_{n-1}}f\ldots f_{i_1}fx_1)\vert i_1,\ldots ,i_{n-1}\in [1,n]\}\text{ in }\lit_4^n.
\end{align*}
Obviously, $x_n$ in $\lit_4^n$ is substituted with $n^{n-1}$ terms. These correspond to the instantiations defined in \cite[Theorem 13]{DBLP:journals/tcs/LeitschL18}.

\section{Future Work}

From a practical point of view, algorithms and data-structures  have to be devised to apply the above rules efficiently, especially to identify {\conflict}s and \blocking substitutions in an incremental way. In the wake of this, experimental evaluations based on an implementation are reasonable. While the procedure is described for first-order logic, we believe that the same ideas could be profitably be applied to other logics, and even to other calculi, including saturation-based procedures. It would also be interesting to combine this approach with other techniques for reducing proof size, for instance variable splitting \cite{DBLP:journals/jsc/HansenAGW12}, or with techniques for the incremental construction of closures \cite{Giese01a}.

A current restriction of the calculus is that no {\evariable}s may occur above their introduction (see the condition on the Instantiation rule in Section \ref{sect:rules}). 
This restriction is essential for soundness: without it, one could for instance construct a closed tableau for the (satisfiable) set of {\mclause}s $\{ \{ \lit \}, \{ \lit' \} \}$, with $\lit = \Literal{\pred}{(x,\alpha)}{\{ [\alpha\backslash a, \alpha \backslash b] \}}$ and $\lit' = \Literal{\pred}{(\beta,y)}{\{ [\beta\backslash a, \beta \backslash b] \}}$, by replacing $x$ by $\alpha$
and $y$ by $\beta$. We think that this condition can be relaxed by defining an order over the {\evariable}s. This would yield a more flexible calculus, thus further reducing proof size. It would be interesting to know whether the exponential bound of Remark \ref{rem:bound} still holds for the relaxed calculus.
An ambitious long-term goal is to devise extensions of {\tableau}x with the same deductive power of cuts, i.e.\ enabling a non-elementary reduction of proof size.

\bibliography{LP_2018}
\bibliographystyle{abbrv}

\end{document}